\newcommand{\im}{\mathrm{im}}
\newcommand{\CA}{\mathrm{CA}}
\newcommand{\ICA}{\mathrm{ICA}}
\newcommand{\Fix}{\mathrm{Fix}}
\newcommand{\Aut}{\mathrm{Aut}}
\newcommand{\Z}{\mathbb{Z}}
\theoremstyle{plain}
\newtheorem{conjecture}{Conjecture}
\newtheorem{corollary}{Corollary}
\newtheorem{lemma}{Lemma}
\newtheorem{proposition}{Proposition}
\newtheorem{theorem}{Theorem}
\newtheorem{claim}{Claim}
\theoremstyle{definition}
\newtheorem{definition}{Definition}
\newtheorem{example}{Example}
\newtheorem{remark}{Remark}
\begin{document}

\title{One-dimensional cellular automata with a unique active transition}
\author[1]{Alonso Castillo-Ramirez\footnote{Email: alonso.castillor@academicos.udg.mx}}
\author[1]{Maria G. Maga\~na-Chavez \footnote{Email: maria.magana3917@alumnos.udg.mx }}
\author[1]{Luguis de los Santos Ba\~nos  \footnote{Email: luguis.banos@academicos.udg.mx }}
\affil[1]{Centro Universitario de Ciencias Exactas e Ingenier\'ias, Universidad de Guadalajara, M\'exico.}

\maketitle

\begin{abstract}
A one-dimensional cellular automaton $\tau : A^\mathbb{Z} \to A^\mathbb{Z}$ is a transformation of the full shift defined via a finite neighborhood $S \subset \mathbb{Z}$ and a local function $\mu : A^S \to A$. We study the family of cellular automata whose finite neighborhood $S$ is an interval containing $0$, and there exists a pattern $p \in A^S$ satisfying that $\mu(z) = z(0)$ if and only if $z \neq p$; this means that these cellular automata have a unique \emph{active transition}. Despite its simplicity, this family presents interesting and subtle problems, as the behavior of the cellular automaton completely depends on the structure of $p$. We show that every cellular automaton $\tau$ with a unique active transition $p \in A^S$ is either idempotent or strictly almost equicontinuous, and we completely characterize each one of these situations in terms of $p$. In essence, the idempotence of $\tau$ depends on the existence of a certain subpattern of $p$ with a translational symmetry.  
\\

\textbf{Keywords:} cellular automaton; unique active transition; idempotent; strictly almost equicontinuous.       
\end{abstract}

\section{Introduction}\label{intro}

Cellular automata (CA) are models of computation over a discrete space with the key characteristic of being defined via local function that is applied homogeneously and in parallel over the whole space. Since their introduction by John von Neumann and Stanislaw Ulam, CA have gained significance due to their connections with discrete complex modeling, symbolic dynamics, and theoretical computer science.   

In particular, one-dimensional cellular automata are transformations $\tau : A^\mathbb{Z} \to A^\mathbb{Z}$, where $A$ is a set with at least two elements, known as the \emph{alphabet}, and $A^\Z$ is the set of all bi-infinite sequences of elements in $A$, known as the \emph{full shift}, such that there exists a \emph{finite neighborhood} $S \subset \mathbb{Z}$ and a \emph{local function} $\mu : A^S \to A$ satisfying  
\[ \tau(x)(k) = \mu( (k \cdot x)\vert_{S}), \quad \forall x \in A^\mathbb{Z}, k \in \mathbb{Z},  \]
where $k \cdot x \in A^\Z$ is the \emph{shift action} of $k$ on $x$, and $\vert_S$ denotes the restriction to $S$. The so-called Curtis-Hedlund-Lyndon theorem (see \cite{CSC10,LM95}) establishes that a transformation $\tau : A^\mathbb{Z} \to A^\mathbb{Z}$ is a cellular automaton if and only if it is uniformly continuous and shift equivariant. This result builds an important bridge between symbolic dynamics and the theory of cellular automata. 

In this paper, we consider the class of one-dimensional cellular automata $\tau : A^\mathbb{Z} \to A^\mathbb{Z}$ defined via a local function $\mu : A^S \to A$ whose finite neighborhood $S$ is an interval containing $0$, and there exists $p \in A^S$ such that $\mu(z) = z(0)$ if and only if $z \neq p$. This means that such a cellular automaton $\tau$ will act on $A^\Z$ as the identity function, except that when it reads a fixed pattern $p$ in a bi-infinite sequence, it will write a symbol $a:= \mu(p) \in A \setminus \{p(0)\}$. This is equivalent of saying that $\tau$ has a unique \emph{active transition} $p \in A^S$ in the terminology recently introduced by various authors in \cite{Pedro1, Pedro2, Concha, Fates1, Fates2}. 

The class of one-dimensional cellular automata with a unique active transition resemble the so-called \emph{Coven cellular automata}, whose remarkable dynamical properties were studied in \cite{Maas,Coven}. However, it follows by their construction that Coven cellular automata actually have two active transitions. 
 
In \cite{IDEM}, a wider class of cellular automata $\tau : A^G \to A^G$ (where the group of integers $\Z$ is replaced by an arbitrary group $G$) with a unique active transition $p \in A^S$ were studied. It was noticed that such CA are often \emph{idempotents}, which means that $\tau \circ \tau = \tau$; for example, it was shown that if $p$ is constant or \emph{symmetric} (in the sense that $S=S^{-1}$ and $p(s) = p(s^{-1})$, $\forall s \in S$), then $\tau$ is idempotent. Moreover, a full characterization of the idempotency of $\tau$ was given when $p$ is \emph{quasi-constant} (i.e. there is $s \in S$ such that $p \vert_{S \setminus \{s\}}$ is constant). However, the full characterization of the idempotency of $\tau : A^G \to A^G$ when $p$ is an arbitrary pattern remains open.   

The main result of this paper is the characterization of the idempotency of one-dimensional cellular automata with a unique active transition $p \in A^S$. Furthermore, we show that when the cellular automaton is not idempotent, then it follows a specific dynamic behavior. A cellular automaton $\tau : A^\Z \to A^\Z$ is a topological dynamical system, where the dynamics is given by the iteration of $\tau$. In this context, a cellular automaton is \emph{equicontinuous} if all bi-infinite sequences $x \in A^\Z$ are equicontinuous points in the sense that the trajectories of all neighbors of $x$ stay close to the trajectory of $x$. A cellular automaton is \emph{almost equicontinuous} if its equicontinuous points are dense in $A^\Z$, and it is \emph{strictly almost equicontinuous} if it is almost equicontinuous but not equicontinuous. 

Before we state the main result of this paper, we introduce some notation. For $k, \ell \in \mathbb{Z}$, we denote an interval by $[k, \ell] := \{ n \in \mathbb{Z} : k \leq n \leq \ell \}$. For a subset $S \subseteq \mathbb{Z}$, we use the notation $S_+ := \{ s \in S : s >0 \}$ and $S_- := \{ s \in S : s<0\}$.  

\begin{theorem}\label{intro-main}
Let $\tau : A^\mathbb{Z} \to A^\mathbb{Z}$ be a cellular automaton with a unique active transition $p \in A^S$, where $S := [k , \ell] \subset \mathbb{Z}$ is such that $k \leq 0 \leq \ell$. Let $\mu : A^S \to A$ be the corresponding local defining function for $\tau$. Then, $\tau$ is not idempotent if and only if there exists $t \in S_+$ such that 
\[ p(t) = \mu(p) \quad \text{ and } \quad p(i) = p(i-t), \  \forall i \in [k+t, \ell] \setminus \{ t \}, \]  
or there exists $t \in S_-$ such that
\[ p(t) = \mu(p) \quad \text{ and } \quad p(i) = p(i-t), \ \forall i \in [k, \ell + t] \setminus \{ t \}.  \]  
Moreover, $\tau$ is not idempotent if and only if it is strictly almost equicontinuous. 
\end{theorem}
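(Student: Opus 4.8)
The plan is to split the equivalence into three assertions and then recombine them: (A) if $\tau$ is idempotent then it is equicontinuous; (B1) if $\tau$ is not idempotent then it is not equicontinuous; and (B2) every member of this family is almost equicontinuous. Granting these, both implications follow at once. A strictly almost equicontinuous $\tau$ is in particular not equicontinuous, hence not idempotent by the contrapositive of (A); conversely, a non-idempotent $\tau$ is almost equicontinuous by (B2) and non-equicontinuous by (B1), hence strictly almost equicontinuous. Assertion (A) is immediate: if $\tau \circ \tau = \tau$ then $\tau^n = \tau$ for every $n \ge 1$, so $\{\tau^n : n \ge 0\} = \{\mathrm{id}, \tau\}$ is a finite family of maps, each uniformly continuous by the Curtis-Hedlund-Lyndon theorem; a finite family of uniformly continuous maps is automatically uniformly equicontinuous, so $\tau$ is equicontinuous.

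For (B2) I would exploit a structural feature of this family: the local rule only ever overwrites the central symbol $p(0)$, and only by $a := \mu(p)$. Concretely, if $x(j) \ne p(0)$ then $(j \cdot x)\vert_S \ne p$, so $\tau(x)(j) = x(j)$; since the value is unchanged and still differs from $p(0)$, induction yields $\tau^n(x)(j) = x(j)$ for all $n$. Thus any cell carrying a symbol $c \ne p(0)$ (such $c$ exists since $|A| \ge 2$) is frozen for all time, independently of the rest of the configuration. A block of at least $\ell - k$ consecutive such cells is therefore a permanent wall that no dependency window $[\,j+k, j+\ell\,]$ can straddle, so it blocks all propagation of information between its two sides. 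Configurations possessing such walls arbitrarily far to the left and to the right are then equicontinuous points, and they are dense (extend any central word by periodic walls). By the standard result that a cellular automaton admitting such a wall (a blocking word) is almost equicontinuous, (B2) follows.

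The crux is (B1), for which I would use the characterization of non-idempotence already established in the first part of the theorem. Suppose $\tau$ is not idempotent and take the case $t \in S_+$ (the case $t \in S_-$ is symmetric). The two conditions $p(t) = a$ and $p(i) = p(i-t)$ for $i \in [k+t,\ell]\setminus\{t\}$ are exactly the algebra needed for a firing to regenerate a shifted copy of the active transition: if a configuration satisfies $x\vert_{[c+k,\,c+\ell]} = p$ and this is the only occurrence meeting the relevant cells, then after applying $\tau$, which sets the cell at $c$ to $a$, the result has an occurrence of $p$ centered at $c - t$. Indeed, the new center contributes $a = p(t)$ at offset $t$, while on the overlap with the former occurrence the matching requirement reads precisely $p(i-t) = p(i)$ for $i \in [k+t,\ell]\setminus\{t\}$. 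Iterating, a single active site moves by $t$ at every step and hence travels arbitrarily far. I would package this into a non-equicontinuous point: along a suitable almost-$t$-periodic background, capped far away by a frozen wall from (B2), a local perturbation launches such a moving site, so that two configurations agreeing on an arbitrarily large central window nonetheless have iterates that differ at the origin after sufficiently many steps. This shows $\tau$ is not equicontinuous.

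The main obstacle is the construction behind (B1): one must choose the background so that the parallel update keeps exactly one moving active site at each step, with no spurious occurrences of $p$ appearing either in the already-fired region (now carrying the symbol $a$) or ahead of the front. Verifying this requires combining the almost-$t$-periodicity of $p$ with the fact that $a \ne p(0)$ to rule out unwanted occurrences; once the single-site propagation is secured, extracting the non-equicontinuous point, and hence the failure of equicontinuity, is routine.
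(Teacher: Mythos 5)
Your handling of the ``moreover'' clause has the right architecture and essentially matches the paper's: (A) is the easy half of the fact that a cellular automaton is equicontinuous iff it has finite order; (B2) is Lemma~\ref{le-b}(3), proved there with the blocking word $a^{r+1}$ ($a:=\mu(p)$) and K\r{u}rka's criterion; and the travelling active site in (B1) is exactly Proposition~\ref{cor-x}, where a configuration $x$ with $\tau^n(x)=(-nt)\cdot x$ is exhibited. The first genuine gap is that you never prove the first assertion of the theorem, the combinatorial characterization of non-idempotence: you invoke it as ``already established,'' but it is part of the statement and is where most of the work lies. For the forward direction one reduces (via idempotence being equivalent to $\Fix(\tau)=X_p=\im(\tau)$) to a configuration $x$ with $\mu((s\cdot x)\vert_S)=p(s)$ for all $s\in S$, hence to some $t\in S\setminus\{0\}$ with $(t\cdot x)\vert_S=p$; but to conclude $p(i)=p(i-t)$ for $i\in[k+t,\ell]\setminus\{t\}$ you must know this $t$ is \emph{unique}, so that $(i\cdot x)\vert_S\neq p$ and hence $\mu((i\cdot x)\vert_S)=x(i)$ for every other $i\in S$. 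That uniqueness (Theorem~\ref{uniqueness}) is the technical heart of the paper; it is proved by an argument on residue classes modulo $r-t$ (Lemmas~\ref{rk-int} and~\ref{le-seq}) that uses the interval hypothesis on $S$ essentially, and it fails for non-interval $S$ (e.g.\ $S=\{-1,0,3\}$, $p=010$). The converse direction likewise needs an explicit witness configuration together with a check that no unintended occurrence of $p$ interferes.

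The second gap is the one you flag yourself in (B1): that the background keeps exactly one occurrence of $p$ alive at every step. This is not routine bookkeeping; it is the same uniqueness problem again, now over all of $\mathbb{Z}$, and the paper settles it for the concrete configuration $x=(p\vert_{[k,k+t)})^\infty\, p\,(p\vert_{(\ell-t,\ell]})^\infty$ by reusing Lemma~\ref{le-seq} (Claims~\ref{uniqueness2} and~\ref{uniqueness3}). Once that is done, your wall-capped perturbation argument is unnecessary: the single occurrence translates by $t$ forever, so $x$ is aperiodic and the iterates $\tau^n(x)=(-nt)\cdot x$ are pairwise distinct, whence $\tau$ has infinite order and cannot be equicontinuous. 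In short, the dynamical superstructure of your proposal is sound, but the two combinatorial uniqueness verifications on which it rests --- one for each direction of the characterization, and one for the propagation --- are precisely the missing content.
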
 

In essence, Theorem \ref{intro-main} characterizes the idempotence of $\tau$ in terms of the existence of $t \in S \setminus \{0\}$ such that $p(t) = \mu(p) \neq p(0)$ and there is a subpattern of $p$ invariant with respect to the translation by $t$. 

One-dimensional cellular automata satisfy a well-known dichotomy that they are either almost equicontinuous or \emph{sensitive}, which are considered to be chaotic systems \cite[Ch. 5]{Kurka}. In this spirit, Theorem \ref{intro-main} may be considered as a dichotomy satisfied by one-dimensional cellular automata with a unique active transition, as it shows that they are either idempotent, which are trivial as dynamical systems, or strictly almost equicontinuous, which are dynamical systems with intermediate complexity.    

Theorem \ref{intro-main} also has a relevant algebraic interpretation. The monoid $\CA(\Z,A) = \text{End}(A^\Z)$ consisting of the set all one-dimensional cellular automata equipped with the composition of functions has an intricate structure (see \cite{CSC10}), and its group of invertible elements (denoted by $\Aut(A^\Z)$ or $\ICA(\Z,A)$) has been widely studied in symbolic dynamics (e.g., see \cite[Sec. 13.2]{LM95}). In this context, Theorem \ref{intro-main} characterizes an infinite family of idempotents in $\CA(\Z,A)$, which is relevant as idempotents are fundamental tools in the theory of monoids (see \cite{semi}). Since it is known that $\tau \in \CA(\Z, A)$ is equicontinuous if and only if $\tau$ has \emph{finite order} (i.e. $\tau^m = \tau^n$, for some $m \neq n$), Theorem \ref{intro-main} also gives the algebraic dichotomy that every one-dimensional cellular automaton with a unique active transition is either idempotent or it has infinite order.     

The structure of this paper is as follows. In Section 2 we fix the notation and introduce some basic properties of one-dimensional cellular automata with a unique active transition. In Section 3 we present the proof of Theorem \ref{intro-main}. Finally, in Section 4, we discuss some open problems and future work.

\section{Cellular automata with a unique active transition}

Let $A$ be a set with at least two elements, known as the \emph{alphabet}. We shall assume that $\{0,1\} \subseteq A$. Although $A$ is usually assumed to be finite, the results of this paper also hold when it is infinite. The set $A^\mathbb{Z}$, known as the \emph{full shif}, is the set of all functions $x : \mathbb{Z} \to A$. We may identify each $x \in A^\mathbb{Z}$ with a bi-infinite sequence of symbols of $A$: 
\[  x = \dots x(-2) x(-1) x(0) x(1) x(2) \dots  \]
A \emph{pattern} over a finite subset $S \subset \mathbb{Z}$ is a function $p : S \to A$. We denote by $A^S$ the set of all patterns over $S$. We shall assume that finite subsets $S$ are given in the natural order of $\Z$; hence, if $S = \{ s_1, s_2, \dots s_n\}$, we assume $s_i < s_j$ for all $i < j$. With this convention, we may identify a pattern $p \in A^S$ with an element of the Cartesian power $A^{\vert S \vert}$ via
\[ p = p(s_1) p(s_2) \dots p(s_n). \] 

The \emph{shift action} of $\mathbb{Z}$ on $A^\mathbb{Z}$ is a function $\cdot : \mathbb{Z} \times A^\mathbb{Z} \to A^\mathbb{Z}$ defined by 
\[ (k \cdot x)(i) := x(i + k), \quad \forall x \in A^\mathbb{Z}, k, i \in \mathbb{Z}. \]
We say that a pattern $p \in A^S$ \emph{appears} in $x \in A^\Z$ if there exists $k \in \Z$ such that $(k \cdot x)\vert_S = p$.  
 
The full shift $A^{\mathbb{Z}}$ is a topological space with the product topology of the discrete topology of $A$. In general, $A^\mathbb{Z}$ is Hausdorff, totally disconnected and metrizable, and, when $A$ is finite, it is also compact (and so, a Cantor space). This topology is induced by the following metric $d$ on $A^\Z$ as remarked in \cite[1.9.2]{CSC10}: for all $x,y \in A^\Z$, 
\[ d(x,y) := \begin{cases}
0 & \text{ if } x=y \\
2^{- \max \{ k \geq 0 \; : \; x \vert_{[-k,k]} = y \vert_{[-k,k]} \} } & \text{ if } x \neq y.
\end{cases} \]

An \emph{endomorphism} of $A^\mathbb{Z}$ is a function $\tau : A^\mathbb{Z} \to A^{\mathbb{Z}}$  that is uniformly continuous (or just continuous, when $A$ is finite) and \emph{shift equivariant}, which means that it commutes with the shift action: 
\[ \tau( k \cdot x) = k \cdot \tau(x), \quad \forall x \in A^{\mathbb{Z}}, k \in \mathbb{Z}.  \]

By the Curtis-Hedlund-Lyndon theorem \cite[Sec. 1.8-1.9]{CSC10}, $\tau : A^\mathbb{Z} \to A^{\mathbb{Z}}$ is an endomorphism if and only if it is a \emph{cellular automaton}, which means that there is a finite subset $S \subset \mathbb{Z}$ (called a \emph{neighborhood}, or a \emph{memory set}) and a local function $\mu : A^S \to A$ such that 
\[ \tau(x)(k) = \mu( (k \cdot x)\vert_{S}), \quad \forall x \in A^\mathbb{Z}, k \in \mathbb{Z}.  \]

As defined in \cite[Def. 2.25]{Kurka}, an element $x \in A^\Z$ is an \emph{equicontinuous point} of a cellular automaton $\tau : A^\Z \to A^\Z$ if for all $\epsilon >0$ there exists $k \in \Z_+$ such that for all $y \in A^\Z$ with $y \vert_{[-k,k]} = x \vert_{[-k,k]}$ and all $n \geq 0$, we have $d(\tau^n(y) , \tau^n(x)) < \epsilon$. The cellular automaton $\tau$ itself is \emph{equicontinuous} if every $x \in A^\Z$ is an equicontinuous point of $\tau$. Moreover, we say that $\tau$ is \emph{almost equicontinuous} if its set of equicontinuous points is dense in $A^\Z$, and that $\tau$ is \emph{strictly almost equicontinuous} if it is almost equicontinuous but not equicontinuous.    

As recently introduced by several authors \cite{Pedro1, Pedro2, Concha, Fates1, Fates2}, the set of \emph{active transitions} of a local function $\mu : A^S \to A$ with $0 \in S$ is defined by
\[ \alpha(\mu) := \{z \in A^S : \mu(z) \neq z(0) \}. \]
We may think of this as the set of patterns in $A^S$ on which $\mu$ does not act as the projection to $0$. Clearly, the cellular automaton define by $\mu$ is equal to the identity function of $A^\Z$ if and only if $\alpha(\mu)= \emptyset$.  

\begin{definition}
Let $S \subset \Z$ be a finite set of integers such that $0 \in S$. We say that a cellular automaton $\tau : A^\mathbb{Z} \to A^{\mathbb{Z}}$ has a \emph{unique active transition} $p \in A^S$ if $\tau$ has a local defining function $\mu : A^S \to A$ such that $\mu(z) = z(0)$ if and only if $z \neq p$ (i.e. $\alpha(\mu) = \{ p \}$). 
\end{definition}

For a fixed $p \in A^S$, there exist $\vert A \vert - 1$ cellular automata with unique active transition $p$ since the local function $\mu : A^S \to A$ may be defined to satisfy $\mu(p) = a$ for any $a \in A \setminus \{p(0)\}$. 

\begin{example}
Let $A := \{ 0,1 \}$. One-dimensional cellular automata that admit a finite neighborhood $S := \{-1,0,1\} \subset \Z$ are called \emph{elementary cellular automata} (ECA) and they are labeled by a \emph{Wolfram number} (see \cite[Sec. 2.5]{Kari}). The local functions $\mu : A^S \to A$ of ECA may be described by a table whose first row contains the the patterns in $z \in A^S$ identified with triplets $z(-1)z(0)z(1) \in A^3$, and whose second row contains the the values $\mu(z) \in A$. For example, the local function of the ECA with Wolfram number 200 is defined by the following table:  
\[ \begin{tabular}{c|cccccccc}
$z \in A^S$ & $111$ & $110$ & $101$ & $100$ & $011$ & $010$ & $001$ & $000$ \\ \hline
$\mu(z) \in A$ & $1$ & $1$ & $0$ & $0$ & $1$ & $0$ & $0$ & $0$
\end{tabular}\] 
The Wolfram number 200 corresponds to the fact that the second row of the table is the binary number $11001000$, which corresponds to the decimal number $200$. Observe that this ECA has a unique active transition $p = 010 \in A^S$. 
\end{example}

We finish this section with a few results that hold for any cellular automaton with a unique active transition.

\begin{lemma}\label{le-b}
Let $\tau : A^\mathbb{Z} \to A^{\mathbb{Z}}$ be a cellular automaton with a unique active transition $p \in A^S$.
\begin{enumerate}
\item The set $S \subset \Z$ is the minimal neighborhood of $\tau$ (i.e. it is the neighborhood of smallest cardinality admitted by $\tau$). 
\item The set of fixed points $\Fix(\tau)$ is equal to the subshift $X_p$ of $A^\Z$ with forbidden pattern $p$: 
 \[ X_p := \{ x \in A^\Z : (k \cdot x) \vert_S \neq p, \ \forall k \in \Z \}. \]
\item $\tau : A^\Z \to A^\Z$ is almost equicontinuous. 
\end{enumerate}
\end{lemma}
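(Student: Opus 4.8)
Since the three assertions rely on the same elementary description of $\tau$, I would first record it: $\tau$ alters the entry of a configuration $x$ at a site $i$ only when the window $i+S$ currently displays $p$, in which case $p(0)$ is replaced by $a:=\mu(p)\neq p(0)$; at every other site $\tau$ acts as the identity.

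For part (1), recall that every cellular automaton has a unique minimal memory set, namely the set of coordinates on which its local rule genuinely depends. Hence it suffices to show that $\mu$ depends on every $s\in S$. For $s\neq 0$ I would compare $p$ with the pattern obtained by changing only its value at $s$ (possible since $|A|\geq 2$): the outputs are $a$ and $p(0)$, which differ. For $s=0$ I would instead compare two patterns that agree off $0$, disagree at $0$, and are both different from $p$; their outputs equal their respective $0$-entries and therefore differ. This last comparison requires a second coordinate to force both patterns away from $p$, so the only case needing separate (and trivial) attention is $S=\{0\}$.

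For part (2), I would expand the condition $\tau(x)=x$ site by site. Using $(k\cdot x)\vert_S(0)=x(k)$, the rule above shows that $\tau(x)(k)\neq x(k)$ exactly when $(k\cdot x)\vert_S=p$ (then the output is $a\neq p(0)=x(k)$) and that $\tau(x)(k)=x(k)$ otherwise. Thus $x$ is fixed if and only if $p$ appears nowhere in $x$, which is precisely the defining condition of $X_p$.

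Part (3) is where the real work lies, and its engine is the following \emph{freezing} property: if $x(i)\neq p(0)$ then $(i\cdot x)\vert_S\neq p$, whence $\tau(x)(i)=x(i)$, and by induction $\tau^n(x)(i)=x(i)$ for all $n\geq 0$. In other words, any site carrying a symbol different from $p(0)$ keeps that symbol throughout the forward orbit, regardless of the rest of the configuration. To obtain a dense set of equicontinuous points I would fix $c\in A\setminus\{p(0)\}$ and consider the configurations $x$ that coincide with a prescribed finite pattern on a central window and equal $c$ everywhere else; such configurations are dense in $A^\Z$, and all of their sites outside the central window are frozen. The decisive step, which I expect to demand the most careful index bookkeeping, is to check that a frozen block of width $r:=\max\{|s|:s\in S\}$ halts the propagation of information: if the blocks $[L,L+r)$ and $[R,R+r)$ are frozen, then for every site $i$ strictly between them the window $i+S$ is contained in $[L,R+r)$, so the evolution of the enclosed region is a function of the single finite window $x\vert_{[L,R+r)}$. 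Given any central window $[-m,m]$, I would choose two such frozen walls inside the constant-$c$ region on either side of it; any $y$ agreeing with $x$ on the finite interval spanned by the walls then carries the same walls and hence the same evolution on $[-m,m]$ for all times, yielding $d(\tau^n(y),\tau^n(x))<\epsilon$. This proves that each such $x$ is an equicontinuous point, and their density gives almost equicontinuity. Equivalently, the width-$r$ frozen block is a blocking word, and almost equicontinuity follows from the standard characterization of almost equicontinuous cellular automata via blocking words.
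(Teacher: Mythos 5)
Your proof is correct and, for part (3) --- the only part the paper actually proves rather than cites from \cite{IDEM} --- it is essentially the paper's argument: a constant block filled with a symbol different from $p(0)$ is frozen under iteration and hence is an $r$-blocking word, and almost equicontinuity follows from K\r{u}rka's characterization (the paper takes the specific symbol $a=\mu(p)$ on $[0,r]$; your arbitrary $c\neq p(0)$ and your explicit ``two frozen walls'' unpacking of why a blocking word yields a dense set of equicontinuous points change nothing essential). Your direct arguments for (1) and (2), which the paper outsources to \cite{IDEM}, are the standard ones and are fine, with one caveat: the case $S=\{0\}$ that you set aside as trivial is not quite --- if moreover $|A|=2$, then $\mu$ is the constant map with value $a$, so the minimal neighborhood is $\emptyset$ and assertion (1) actually fails; this is a degenerate exception to the statement itself rather than a flaw in your method, and it does not affect (2) or (3).
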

\begin{proof}
The proof of parts (1.) and (2.) may be found in \cite{IDEM}. To show part (3.), we use \cite[Prop. 5.12]{Kurka} which establishes that $\tau$ is almost equicontinuous if and only if there exists an $r$-blocking word $u$ for $\tau$, where $r := \max\{ \max(S), \vert \min(S) \vert  \}$ is the \emph{radius} of the cellular automaton. Let $\mu : A^S \to A$ be the local defining function for $\tau$, and let $a:= \mu(p) \in A \setminus \{p(0)\}$. Define $u \in A^{[0,r]}$ by $u(i) := a$, for all $i \in [0,r]$. Then, for all $x \in A^\Z$ such that $x \vert_{[0,r]} = u$ and for all $i \in [0,r]$, we have 
\[ \tau(x)(i) = \mu( (i \cdot x) \vert_S) ) = (i \cdot x)(0) = x(i) = a,  \]
where the second equality follows since $(i \cdot x)\vert_{S} \neq p$ (as $(i \cdot x)(0) = a \neq p(0)$ for all $i \in [0,r]$). Repeating the above argument, we see that $\tau^n(x) \vert_{[0,r]} = u$ for all $n \geq 0$. This means that $u$ is an $r$-blocking word for $\tau$, and the result follows. 
\end{proof}

As we shall see in the next section, the subshift $X_p$ is important to determine the idempotence of cellular automata with a unique active transition $p \in A^S$. Such subshifts defined by forbidding a single pattern include the well-known \emph{golden mean subshift}, and have been recently studied in detail in \cite{Wu}. 


\section{Characterization of CA with a unique active transition} 

In general, the idempotence of a cellular automaton $\tau : A^\Z \to A^\Z$ with finite neighborhood $S \subset \Z$ may be checked by obtaining the local function of $\tau \circ \tau : A^\Z \to A^\Z$, which has finite neighborhood $S + S := \{ s+ t : s,t \in S \}$ (see \cite[Prop. 1.4.9]{CSC10}), and comparing it with the local function of $\tau$ (which may be extended to the domain $A^{S+S}$ in order to be compared). In the next results we shall see that there is a quicker and simpler method to decide the idempotence of a CA with a unique active transition.  

\begin{lemma}\label{cor-idem}
Let $\tau : A^\mathbb{Z} \to A^{\mathbb{Z}}$ be a cellular automaton with a unique active transition $p \in A^S$. Then, $\tau$ is not idempotent if and only if there exists $x \in A^\Z$ such that $p$ appears as a subpattern in $\tau(x) \in A^\Z$. 
\end{lemma}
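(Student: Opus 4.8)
The plan is to reduce the idempotence condition to a statement about the image of $\tau$ landing inside its fixed-point set, and then apply the characterization of $\Fix(\tau)$ already established in Lemma \ref{le-b}. The starting point is the elementary observation, valid for any self-map $f$ of any set, that $f \circ f = f$ holds if and only if $f$ acts as the identity on its own image, i.e. if and only if $\im(f) \subseteq \Fix(f)$. Applying this to $\tau$, I would note that $\tau$ is idempotent precisely when $\tau(\tau(x)) = \tau(x)$ for every $x \in A^\Z$, which is exactly the statement that $\tau(x) \in \Fix(\tau)$ for all $x$, i.e. $\im(\tau) \subseteq \Fix(\tau)$.

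Next I would invoke Lemma \ref{le-b}(2), which identifies $\Fix(\tau)$ with the subshift $X_p = \{ y \in A^\Z : (k \cdot y)\vert_S \neq p, \ \forall k \in \Z \}$, the set of configurations in which the pattern $p$ does not appear. Combining this with the previous reduction, $\tau$ is idempotent if and only if $\tau(x) \in X_p$ for every $x \in A^\Z$, which unwinds to the assertion that $p$ fails to appear in $\tau(x)$ for every $x \in A^\Z$.

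The conclusion then follows by contraposition: $\tau$ is \emph{not} idempotent if and only if the condition ``$p$ does not appear in $\tau(x)$ for all $x$'' fails, that is, if and only if there exists some $x \in A^\Z$ such that $p$ appears as a subpattern in $\tau(x)$. This matches the claimed statement exactly.

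Since the heavy lifting is done by Lemma \ref{le-b}(2), there is no genuine obstacle here; the only point requiring care is the logical bookkeeping in translating ``$\im(\tau) \subseteq \Fix(\tau)$'' through the description of $X_p$ and then negating the universally quantified statement correctly. I would make sure to state the general fact $f \circ f = f \iff \im(f) \subseteq \Fix(f)$ explicitly, since it is the conceptual pivot that turns a question about composing local functions on the larger neighborhood $S + S$ into a direct question about whether $p$ can ever be produced in a single image configuration.
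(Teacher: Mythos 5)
Your proposal is correct and follows essentially the same route as the paper: both reduce idempotence to the containment/equality of $\im(\tau)$ and $\Fix(\tau)$, invoke Lemma \ref{le-b}(2) to identify $\Fix(\tau)$ with $X_p$, and negate. The only cosmetic difference is that you phrase the criterion as $\im(\tau) \subseteq \Fix(\tau)$ while the paper writes $\Fix(\tau) = \im(\tau)$, which are equivalent since $\Fix(\tau) \subseteq \im(\tau)$ always holds.
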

\begin{proof}
In general, it holds that a transformation $\tau : A^\Z \to A^\Z$ is idempotent if and only if $\Fix(\tau) = \im(\tau)$, so by Lemma \ref{le-b} (2), this is equivalent to $X_p = \im(\tau)$. Therefore, $\tau$ is not idempotent if and only if $X_p$ is a proper subset of $\im(\tau)$, which is equivalent to the statement of this lemma. 
\end{proof}

\begin{lemma}\label{le-2}
Let $\tau : A^\mathbb{Z} \to A^{\mathbb{Z}}$ be a cellular automaton with a unique active transition $p \in A^S$. Let $\mu : A^S \to A$ be the corresponding local defining function of $\tau$. Then, $\tau$ is not idempotent if and only if there exists $x \in A^\Z$ such that 
\begin{equation} \label{eq-x}
\mu( (s \cdot x) \vert_S  ) = p(s), \quad \forall s \in S. 
\end{equation}
Furthermore, if such $x \in A^\Z$ exists, it must also satisfy the following:
\begin{enumerate}
\item $x \vert_S \neq p$,
\item $x(0) = p(0)$. 
\item $(t \cdot x) \vert_S  = p$ for some $t \in S \setminus \{0\}$. 
\end{enumerate}
\end{lemma}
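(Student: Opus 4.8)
The plan is to first reinterpret condition \eqref{eq-x} geometrically. By the very definition of $\tau$ through $\mu$, we have $\tau(x)(s) = \mu((s \cdot x)\vert_S)$ for every $s \in \Z$, so the system of equalities in \eqref{eq-x} is nothing but the single assertion $\tau(x)\vert_S = p$; that is, $p$ occurs in $\tau(x)$ precisely at the home window $S$. With this reading, the main equivalence reduces to Lemma~\ref{cor-idem}, which states that $\tau$ fails to be idempotent if and only if $p$ appears \emph{somewhere} as a subpattern of $\tau(x)$ for some $x$, i.e. $(m \cdot \tau(x))\vert_S = p$ for some $m \in \Z$. The one point requiring care is upgrading ``$p$ appears somewhere'' to ``$p$ appears at $S$'': here I would invoke shift equivariance. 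Since $m \cdot \tau(x) = \tau(m \cdot x)$, replacing $x$ by $x' := m \cdot x$ turns $(m \cdot \tau(x))\vert_S = p$ into $\tau(x')\vert_S = p$, which is exactly \eqref{eq-x} for $x'$; the converse implication is immediate. Hence the existence of some $x$ satisfying \eqref{eq-x} is equivalent to the non-idempotence of $\tau$.

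For the three consequences, assume $x \in A^\Z$ satisfies \eqref{eq-x}. Evaluating \eqref{eq-x} at $s = 0$ and using $0 \cdot x = x$ yields $\mu(x\vert_S) = p(0)$. If we had $x\vert_S = p$, this would force $\mu(p) = p(0)$, contradicting that $p$ is an active transition (so $\mu(p) \neq p(0)$); this proves part (1), that $x\vert_S \neq p$. Part (2) then follows at once: since $x\vert_S \neq p$, the defining dichotomy of a unique active transition gives $\mu(x\vert_S) = x\vert_S(0) = x(0)$, and comparing with the $s=0$ instance $\mu(x\vert_S) = p(0)$ we conclude $x(0) = p(0)$.

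Part (3) I would establish by contradiction. Suppose $(s \cdot x)\vert_S \neq p$ for every $s \in S$. Then for each such $s$ the local rule acts as the projection, so $\mu((s \cdot x)\vert_S) = (s \cdot x)(0) = x(s)$, and \eqref{eq-x} collapses to $x(s) = p(s)$ for all $s \in S$, i.e. $x\vert_S = p$, contradicting part (1). Therefore $(t \cdot x)\vert_S = p$ for at least one $t \in S$, and since part (1) rules out $t = 0$, this $t$ lies in $S \setminus \{0\}$, as claimed.

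I expect the only genuinely subtle ingredient to be the shift-equivariance normalization in the main equivalence, which converts the ``appears anywhere'' phrasing of Lemma~\ref{cor-idem} into the anchored statement $\tau(x)\vert_S = p$ of \eqref{eq-x}; everything else is a short sequence of evaluations of \eqref{eq-x} at $s = 0$ together with the clean dichotomy $\mu(z) = z(0)$ versus $z = p$. No heavy computation of the composite local rule on $S + S$ is needed, which is precisely the shortcut advertised in the paragraph preceding Lemma~\ref{cor-idem}.
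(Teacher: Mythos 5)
Your proposal is correct and follows essentially the same route as the paper's proof: both reduce the main equivalence to Lemma~\ref{cor-idem} and use shift equivariance (equivalently, shift invariance of $\im(\tau)$) to anchor the occurrence of $p$ at $S$, and both derive parts (1)--(3) by evaluating \eqref{eq-x} at $s=0$ and arguing by contradiction for (3). Your write-up is slightly more explicit about the normalization $x' := m \cdot x$, but the content is identical.
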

\begin{proof}
By Corollary \ref{cor-idem}, $\tau$ is not idempotent if and only if there exists $x \in A^\Z$ such that $p$ appears as a subpattern in $\tau(x)$. Since $\im(\tau)$ is invariant under the shift action, we may assume that $p$ appears in $S$, i.e.
\[ \tau(x)(s) = p(s), \quad \forall s \in S. \]
Hence, in terms of the local defining function
\[ \mu((s \cdot x) \vert_S) = \tau(x)(s) = p(s), \quad \forall s \in S.  \]
We must have $x \vert_S \neq p$, since $\mu(p) \neq p(0)$, and so $p(0) = \mu(x \vert_S) = x(0)$. Finally, suppose that $(s \cdot x) \vert_S \neq p$ for all $s \in S$. Then, $p(s) = \mu((s \cdot x) \vert_S) = x(s)$ for all $s \in S$, contradicting that $x \vert_S  \neq p$. Thus, there must exists $t \in S \setminus \{0\}$ with $(t \cdot x) \vert_S  = p$. 
\end{proof}

\begin{remark}
The subpatterns $(s \cdot x) \vert_S$ for $s \in S$ are completely determined by $x \vert_{S+S}$. Hence, Lemma \ref{le-2} also gives a quick algorithm for deciding when a cellular automaton generated by a pattern is idempotent, which consists on checking if there exists a pattern $q \in A^{S+S}$ such that $\mu(q_s ) = p(s)$ for all $s \in S$, where $q_s \in A^S$ is the pattern defined by $q_s(t) := q(s+t)$, $\forall t \in S$.   
\end{remark}

\begin{example}
For $S = \{-1,0,1\}$, a cellular automaton with unique active transition $p = 001 \in A^S$ is not idempotent because there exists $q = 00001 \in A^{S+S}$ such that 
\[ \mu(q_{-1}) \mu(q_0) \mu(q_1) = \mu(000) \mu(000) \mu(001) = 001 = p.  \]
\end{example}

\begin{example}\label{counter}
Unlike some constructions and results in symbolic dynamics, the idempotence of a CA with a unique active transition depends on both the pattern $p \in A^S$ and the local neighborhood $S \subset \Z$. For example, the CA with unique active transition $p_1=0010 \in A^{S_1}$ with $S_1 := [-1,2]$ is idempotent, while the CA with unique active transition $p_2 = 0010 \in A^{S_2}$ with $S_2 := [0,3]$ is not idempotent.   
\end{example}

\begin{corollary}\label{cor-1}
With the notation of the previous lemma,
 \[ p(t)= \mu(p) \neq p(e). \]
\end{corollary}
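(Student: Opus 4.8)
The plan is to obtain the corollary as an immediate consequence of Lemma~\ref{le-2}, essentially by reading off the two asserted facts from the identity in that lemma. Throughout I keep the notation of Lemma~\ref{le-2}: I assume $\tau$ is not idempotent, so there is some $x \in A^\Z$ satisfying \eqref{eq-x}, namely $\mu((s \cdot x)\vert_S) = p(s)$ for all $s \in S$, and by part~(3) of that lemma there is an index $t \in S \setminus \{0\}$ with $(t \cdot x)\vert_S = p$. I also read $e$ as the neutral element of $\Z$, that is $e = 0$, following the group-theoretic convention inherited from the companion work \cite{IDEM}; thus $p(e) = p(0)$.

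First I would specialize the defining identity \eqref{eq-x} to the particular index $s = t$, which gives $\mu((t \cdot x)\vert_S) = p(t)$. The key step is then to substitute the equality $(t \cdot x)\vert_S = p$ supplied by part~(3) of Lemma~\ref{le-2} directly into the argument of $\mu$; since $(t \cdot x)\vert_S$ and $p$ are literally the same pattern in $A^S$, this substitution is legitimate and yields $\mu(p) = p(t)$, which is exactly the first equality $p(t) = \mu(p)$ of the corollary.

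It remains to establish the strict inequality, and this is built into the definition of a unique active transition: because $p$ is the active transition of $\mu$, we have $\mu(p) \neq p(0) = p(e)$. Chaining this with the equality just proved gives $p(t) = \mu(p) \neq p(e)$, as claimed. (Note that this is consistent with the constraint $t \neq 0$ from part~(3): were $t = 0$, the relation would read $p(0) = \mu(p) \neq p(0)$, a contradiction.)

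I do not anticipate any genuine obstacle here, as the argument reduces to a single substitution once Lemma~\ref{le-2} is in hand. The only point requiring care is notational rather than mathematical: one must recognize that $e$ denotes the identity $0 \in \Z$, so that $p(e)$ coincides with the value $p(0)$ appearing implicitly in the active-transition condition $\mu(p) \neq p(0)$. With that identification the chain of relations closes immediately.
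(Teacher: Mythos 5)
Your proposal is correct and follows exactly the paper's own argument: specialize the identity $\mu((s\cdot x)\vert_S)=p(s)$ at $s=t$, substitute $(t\cdot x)\vert_S=p$ to get $p(t)=\mu(p)$, and invoke the active-transition condition $\mu(p)\neq p(0)$ for the inequality (with $e$ denoting $0$, a notational leftover from the group-theoretic setting of \cite{IDEM}). Nothing further is needed.
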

\begin{proof}
Since $(t \cdot x) \vert_S = p$ for $t \in S \setminus \{0\}$, then 
\[ p(t) = \mu( (t \cdot x) \vert_S ) = \mu(p).   \]
It is clear that $\mu(p) \neq p(0)$, because $p$ is an active transition of $\mu$. 
\end{proof}

The previous results easily generalize to cellular automata $\tau : A^G \to A^G$, where $G$ is any abstract group. However, henceforth, our proofs will rely on specific properties of the group of integers $\Z$.  

Our next goal is to show that there is a \emph{unique} $t \in S \setminus \{0\}$ such that $(t \cdot x) \vert_S  = p$, where $x \in A^\Z$ satisfies the condition (\ref{eq-x}) of Lemma \ref{le-2}. The uniqueness of $t$ shall be crucial in the proof of Theorem \ref{intro-main}, but first we show a couple of technical results that hold in the case when $S$ is an interval of $\Z$ containing $0$. 

\begin{lemma}\label{rk-int}
Let $S \subset \Z$ be an interval with $0 \in S$, and let $t,r \in S$, $t < r$. For each $i \in (t,r)$, either $i-t \in S$ or $i-r \in S$. 
\end{lemma}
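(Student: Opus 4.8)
The plan is to argue by contradiction, reducing each of the two membership conditions to a single one-sided inequality by exploiting the fact that $S$ is an interval. Write $S = [k, \ell]$ with $k \leq 0 \leq \ell$, and recall the hypothesis $t < i < r$, so that $i - t > 0$ and $i - r < 0$.

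The key observation I would make first is that the sign of $i-t$ and $i-r$ automatically settles one of the two bounds defining $S$ in each case. Indeed, since $i - t > 0 \geq k$, the lower bound is satisfied for free, so $i - t \in S$ is equivalent to the single inequality $i - t \leq \ell$, i.e. $i \leq \ell + t$. Symmetrically, since $i - r < 0 \leq \ell$, the upper bound holds automatically, so $i - r \in S$ is equivalent to $i - r \geq k$, i.e. $i \geq k + r$. This reduction is the one step where the interval hypothesis is genuinely used: for an arbitrary finite $S$ membership cannot be read off from a single inequality, and the statement indeed fails (for instance with $S = \{0,1,3\}$, $t=0$, $r=3$, $i=2$, neither $i-t$ nor $i-r$ lies in $S$).

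With the reduction in hand, I would then suppose for contradiction that both memberships fail, giving $i > \ell + t$ and $i < k + r$ simultaneously. Combining these yields $\ell + t < k + r$, equivalently $r - t > \ell - k$. But $t, r \in [k, \ell]$ forces $r - t \leq \ell - k$, which is the contradiction. Hence at least one of $i - t \in S$ or $i - r \in S$ must hold, completing the proof. There is no serious obstacle here beyond carefully tracking the signs; the only substantive point is the translation of interval membership into a single inequality, after which the contradiction is immediate from $r - t \leq \ell - k$.
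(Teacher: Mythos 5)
Your proof is correct and rests on the same key observation as the paper's: the signs of $i-t>0$ and $i-r<0$ each dispose of one of the two defining inequalities of the interval, so only one bound needs checking in each case. The paper phrases this as a contrapositive (if $i-r\notin S$ then $i-r<k\leq t$, hence $0<i-t<r$ and so $i-t\in S$), while you negate both memberships and derive the contradiction $r-t>\ell-k$; these are essentially the same argument.
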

\begin{proof}
Suppose that $i-r \not \in S$. Since $ i-r <0$, we must have that $i - r < t$. But now, $0 < i-t < r $, implies that $i - r \in S$, since $0, r \in S$ and $S$ is an interval. 
\end{proof}

\begin{lemma}\label{le-seq}
Let $S \subset \mathbb{Z}$ be an interval such that $0 \in S$. Let $x \in A^\Z$ and $p \in A^S$. Suppose that there exist $t,r \in S$, $t < r$, satisfying that $(t \cdot x) \vert_S  = (r \cdot x) \vert_S  = p$, and
\begin{equation}\label{hyp-seq}
\forall s \in (t,r), \ \epsilon \in \{r,t\},  \quad   ( s - \epsilon \in S)  \ \Rightarrow  \ p(s) = p(s-\epsilon).  
\end{equation}
Then, $p(0) = p(t)$. 
\end{lemma}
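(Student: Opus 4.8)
The plan is to reduce everything to a statement about the cyclic group of order $d := r - t$. First I would translate the two shift conditions into statements about $x$: since $(t \cdot x)\vert_S = (r \cdot x)\vert_S = p$, we have $x(i + t) = x(i + r) = p(i)$ for all $i \in S$. For any $j \in S$ with $j + d \in S$ this yields $p(j + d) = x(j + d + t) = x(j + r) = p(j)$, so $p$ is $d$-periodic wherever this makes sense inside $S$. Because $S$ is an interval, the elements of $S$ lying in a fixed residue class modulo $d$ form a block of consecutive terms of an arithmetic progression with common difference $d$, so the periodicity chains along each class and $p$ factors through reduction modulo $d$: there is a well-defined $P : \Z / d\Z \to A$ with $p(j) = P(j \bmod d)$ for all $j \in S$.

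Next I would feed the hypothesis (\ref{hyp-seq}) and Lemma \ref{rk-int} into this picture. For each $s \in (t,r)$, Lemma \ref{rk-int} guarantees that $s - t \in S$ or $s - r \in S$, and in either case (\ref{hyp-seq}) gives $p(s) = p(s - t)$ or $p(s) = p(s - r)$. Since $r \equiv t \pmod d$, both candidate equalities read the same way after reduction, namely $P(\bar s) = P(\bar s - \bar t)$, where I write $\bar n := n \bmod d$. As $s$ runs over the $d - 1$ consecutive integers $t+1, \dots, r-1$, the residues $\bar s$ run over every element of $\Z/d\Z$ except $\bar t$. Hence I obtain the relations $P(c) = P(c - \bar t)$ valid for all $c \in \Z/d\Z \setminus \{\bar t\}$.

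The desired conclusion $p(0) = p(t)$ is exactly $P(0) = P(\bar t)$, so it remains to deduce this from the relations above. Set $g := \bar t$ and let $n$ be its order in $\Z/d\Z$. The relations say $P$ is invariant under subtracting $g$ at every residue except $c = g$; equivalently, on the cyclic subgroup $\langle g \rangle$ the $n$-cycle formed by the translations $c \mapsto c - g$ is intact except for the single broken link from $g$ to $0$. A cycle with one edge removed is still connected, so $P$ is constant on $\langle g \rangle$; concretely, starting from $c = 0$ and applying $P(c) = P(c - g)$ repeatedly gives $P(0) = P(-g) = \cdots = P(-(n-1)g) = P(g)$, where every step is legitimate because $-jg \neq g$ for $0 \le j \le n-2$. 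This yields $P(0) = P(g) = P(\bar t)$, that is, $p(0) = p(t)$.

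The routine steps are the two translations (of the shift conditions and of (\ref{hyp-seq})) into modular form. The crux is the last paragraph: the one residue at which no relation is available is precisely $c = \bar t$, which is exactly the relation that would directly link $P(\bar t)$ to $P(0)$; the argument must therefore travel the ``long way around'' the cyclic subgroup $\langle \bar t \rangle$, and the observation that makes this work is that deleting a single edge from a cycle leaves it connected. I would also check the degenerate cases $d = 1$ (where $(t,r)$ is empty and $p$ is constant on $S$) and $\bar t = 0$ separately, although both are absorbed by the cyclic argument with $n = 1$.
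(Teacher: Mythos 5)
Your proposal is correct and follows essentially the same route as the paper's proof: first showing that $p$ is constant on residue classes modulo $d = r-t$ (the paper's Claim about $[s_1]_m = [s_2]_m$, your well-defined $P$ on $\Z/d\Z$), then combining Lemma \ref{rk-int} with hypothesis (\ref{hyp-seq}) to get the relation $P(c)=P(c-\bar t)$ at every residue except $\bar t$, and finally walking around the cyclic subgroup generated by $\bar t$ (your backward chain $P(0)=P(-g)=\cdots=P(g)$ is the paper's chain $p(s_n)=p(s_{n-1})=\cdots=p(s_1)$ traversed in reverse). The degenerate cases you flag ($d=1$, $\bar t = 0$) correspond to the paper's separate treatment of $[t]_m=[0]_m$, so nothing is missing.
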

\begin{proof}
Define $m := r-t \in \mathbb{Z}_+$. We first show that since $(t \cdot x) \vert_S  = (r \cdot x) \vert_S  = p$, then $p : S \to A$ must be constant on the classes modulo $m$ that intersect $S$. Denote by $[s]_m$ the equivalence class modulo $m$ of $s \in \mathbb{Z}$. 

\begin{claim}\label{lemma-mod}
If $s_1, s_2 \in S$ and $[s_1]_m = [s_2]_m$, then $p(s_1) = p(s_2)$.
\end{claim}
\begin{proof}
Observe that for all $s \in S$ such that $s+m \in S$, we have 
\[ p(s) = (r \cdot x)(s) = x(s+ r) = (t \cdot x)(s +r - t) = p(s + m). \]
If $s \in S$ and $s + km \in S$, for $k \in \mathbb{Z}_+$, then $s+m, s+2m, \dots, s+(k-1)m \in S$, as $S$ is an interval. It follows that $p(s) = p(s+km)$. 
On the other hand, if $s \in S$ and $s-m \in S$, then
\[ p(s) = (t \cdot x)(s) = x(s + t) = (r\cdot x) (s - m) = p(s-m). \]
It follows that if $s \in S$ and $s+km \in S$, for $k \in \mathbb{Z}_-$, then $p(s) = p(s+km)$. 
\end{proof}

If $[t]_m = [0]_m$, then Claim \ref{lemma-mod} implies that $p(t) = p(0)$, and the result follows. Hence, assume that $[ t]_m \neq [0]_m$. 

Let $n$ be the group-theoretic order of $[t]_m$ in $\mathbb{Z}_m$; this is, $n$ is the smallest positive integer such that $n[t]_m = 0$. Let $I := [t,r) \subset S$. Since $\vert I \vert = m$ and $I$ consists of consecutive integers, then $I$ contains representatives of all classes modulo $m$.  Let $s_1, s_2, s_3, \dots, s_{n} \in I$ be such that 
\[ [s_j]_m = j [t]_m , \quad \forall j \in [1,n].  \]
By Lemma \ref{rk-int}, for each $j \in [2,n]$, there exists $\epsilon_j$ such that $s_j - \epsilon_j \in S$, with $\epsilon_j \in \{t,r\}$, and by hypothesis (\ref{hyp-seq})
\[  p(s_j) = p(s_j - \epsilon_j), \quad \forall j \in [2,n].  \]  
Observe that, for all $j \in [2,n]$, we have $[\epsilon_j]_m = [t]_m$, so
\[ [s_j - \epsilon_j]_m = [s_j]_m - [\epsilon_j]_m = j[t]_m - [t]_m = (j-1)[t]_m = [s_{j-1}]_m.  \]
Therefore,
\[  p(s_j - \epsilon_j) = p(s_{j-1}), \quad \forall j \in [2,n].  \]
This shows that $p(s_n) = p(s_1)$. However, $s_1 = t$ and $[s_n] = n[t] = [0]$, so by Claim \ref{lemma-mod},
\[ p(0) = p(s_n) = p(s_1) = p(t).     \] 
The result follows. 
\end{proof}

\begin{theorem}\label{uniqueness}
Let $S \subset \mathbb{Z}$ be an interval such that $0 \in S$. Let $\tau : A^\mathbb{Z} \to A^{\mathbb{Z}}$ be a cellular automaton with a unique active transition $p \in A^S$ and with corresponding local defining function $\mu : A^S \to A$. Then, for every $x \in A^\Z$ satisfying $\mu((s \cdot x)\vert_S) = p(s)$, $\forall s \in S$, there exists a unique $t \in S \setminus \{0\}$ such that $(t \cdot x)\vert_S = p$.
\end{theorem}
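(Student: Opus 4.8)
The plan is to establish existence and uniqueness separately. Existence is immediate: part (3) of Lemma~\ref{le-2} already supplies at least one $t \in S \setminus \{0\}$ with $(t \cdot x)\vert_S = p$. For uniqueness I would argue by contradiction, introducing the set
\[ T := \{ s \in S : (s \cdot x)\vert_S = p \}, \]
assuming $\vert T \vert \geq 2$, and deriving a contradiction with Corollary~\ref{cor-1}. The overall engine is Lemma~\ref{le-seq}, so the work is to arrange its hypotheses.

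First I would record the dichotomy that governs $T$. Since $\mu((s \cdot x)\vert_S) = p(s)$ for every $s \in S$, each $s \in S$ falls into exactly one case: if $s \in T$ then $p(s) = \mu(p) =: a$, whereas if $s \notin T$ then $\mu$ acts as the projection to $0$ on $(s\cdot x)\vert_S$, giving $p(s) = (s \cdot x)(0) = x(s)$. By Lemma~\ref{le-2}(1) we have $0 \notin T$ (as $x\vert_S \neq p$), and by Corollary~\ref{cor-1} we have $a \neq p(0)$.

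The core step is to feed a well-chosen pair into Lemma~\ref{le-seq}. Assuming $\vert T \vert \geq 2$, I would pick $t, r \in T$ with $t < r$ \emph{consecutive} in $T$ (equivalently $r-t$ minimal), so that $(t,r) \cap T = \emptyset$; hence every $s \in (t,r)$ lies in $S \setminus T$ and satisfies $p(s) = x(s)$. This is precisely what verifies hypothesis~(\ref{hyp-seq}): fix $s \in (t,r)$ and $\epsilon \in \{t,r\}$ with $s - \epsilon \in S$. Because $\epsilon \in T$ means $(\epsilon \cdot x)\vert_S = p$, evaluating this identity at the index $s-\epsilon \in S$ yields $x(s) = x(\epsilon + (s-\epsilon)) = p(s-\epsilon)$, so that $p(s) = x(s) = p(s-\epsilon)$, as needed. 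Note only the standing assumption $s-\epsilon\in S$ is used here; Lemma~\ref{rk-int} is what will later guarantee, inside the proof of Lemma~\ref{le-seq}, that a usable $\epsilon$ exists at every step.

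With (\ref{hyp-seq}) established, Lemma~\ref{le-seq} delivers $p(0) = p(t)$. But $t \in T$ forces $p(t) = a \neq p(0)$, the sought contradiction; hence $\vert T \vert = 1$, which is the asserted uniqueness. I expect the main obstacle to be exactly the verification of (\ref{hyp-seq}): the essential trick that unlocks it is the consecutive (minimal-gap) choice of $t$ and $r$, which rules out an intermediate active transition $s \in (t,r) \cap T$ — at such a point one would have $p(s) = a$ rather than $p(s) = x(s)$, and the equality $p(s) = p(s-\epsilon)$ could fail. Removing that possibility is what makes the argument go through cleanly.
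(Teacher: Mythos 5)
Your proposal is correct and follows essentially the same route as the paper: existence via Lemma~\ref{le-2}(3), then for uniqueness choosing two occurrences $t<r$ with no occurrence strictly between them, verifying hypothesis~(\ref{hyp-seq}) exactly as you describe, and invoking Lemma~\ref{le-seq} together with Corollary~\ref{cor-1} to reach the contradiction $p(t)=\mu(p)\neq p(0)=p(t)$. The only cosmetic difference is that you make the set $T$ and the dichotomy $p(s)=a$ versus $p(s)=x(s)$ explicit, whereas the paper phrases the minimal-gap choice as ``we may also assume $(s\cdot x)\vert_S\neq p$ for all $s\in(t,r)$''.
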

\begin{proof}
The existence of such $t \in S \setminus \{0\}$ is given by Lemma \ref{le-2}, part (3). In order to prove the uniqueness, suppose there are $t, r \in S \setminus \{0\}$ such that $(t \cdot x)\vert_S = (r \cdot x)\vert_S = p$. Without loss, assume that $t< r$. We may also assume that 
\begin{equation}\label{eqhyp}
 (s \cdot x)\vert_S \neq p, \quad \forall s \in (t,r) \subset S.    
\end{equation}
This implies that for each $s \in (t,r)$ and $\epsilon \in \{t,r\}$ such that $s - \epsilon \in S$, then 
\[ p(s) = \mu((s \cdot x)\vert_S) = (s \cdot x)(0) = x(s) = (\epsilon \cdot x)( s -  \epsilon) = p(s - \epsilon), \] 
where the last equality holds since $ (\epsilon \cdot x) \vert_S =p$. 

The above shows that the hypothesis (\ref{hyp-seq}) of Lemma \ref{le-seq} is satisfied, so $p(0) = p(t)$. However, by Corollary \ref{cor-1}, $p(t) = \mu(p)$, which contradicts that $\mu(p) \neq p(0)$. The result follows. 
\end{proof}

\begin{example}\label{counter}
Theorem \ref{uniqueness} fails when $S \subset \mathbb{Z}$ is not an interval. For example, consider $S:= \{-1, 0 , 3\}$ and the pattern $p \in A^S$ defined by
\[ p =\left( \begin{tabular}{ccc}
$-1$ & $0$ & $3$ \\
$0$ & $1$ & $0$ 
\end{tabular} \right).   \]
Then, the cellular automaton $\tau : A^\Z \to A^\Z$ with a unique active transition $p$ is not idempotent; an $x \in A^\Z$ that satisfies the condition of Lemma \ref{le-2} is given by
\[ x = \dots 011010 \dots \]
However, we see that there exist two $s \in S$ such that $(s \cdot x)\vert_S = p$; namely, $s=-1$ and $s=3$.  
\end{example}

 \begin{theorem}\label{th-main}
 Let $S := [k, \ell] \subset \mathbb{Z}$ be such that $k \leq 0 \leq l$. Let $\tau : A^{\mathbb{Z}} \to A^{\mathbb{Z}}$ be the cellular automaton with a unique active transition $p \in A^S$, with corresponding local function $\mu : A^S \to A$. Then, $\tau$ is not idempotent if and only if there exists $t \in S_+$ such that 
\[ p(t) = \mu(p) \quad \text{ and } \quad p(i) = p(i-t), \  \forall i \in [k+t, \ell] \setminus \{ t \}, \]  
or there exists $t \in S_-$ such that
\[ p(t) = \mu(p) \quad \text{ and } \quad p(i) = p(i-t), \ \forall i \in [k, \ell + t] \setminus \{ t \}.  \]  
 \end{theorem}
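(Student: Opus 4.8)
The plan is to route both implications through the existence of a \emph{witness} $x \in A^\Z$ satisfying condition (\ref{eq-x}) of Lemma \ref{le-2}, i.e.\ $\mu((s\cdot x)|_S)=p(s)$ for all $s\in S$, and then to translate such a witness into the explicit ``almost-periodicity'' of $p$ recorded in the statement. The sign of the special index $t\in S\setminus\{0\}$ produced along the way will decide which of the two alternatives ($S_+$ or $S_-$) occurs, and the two cases are mirror images of each other, so I would treat $t\in S_+$ in detail and obtain $t\in S_-$ by symmetry.

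For the forward implication, assume $\tau$ is not idempotent. Lemma \ref{le-2} supplies a witness $x$, Theorem \ref{uniqueness} gives a \emph{unique} $t\in S\setminus\{0\}$ with $(t\cdot x)|_S=p$, and Corollary \ref{cor-1} gives $p(t)=\mu(p)\neq p(0)$. I would then extract two descriptions of $x$. First, by uniqueness of $t$, every $s\in S\setminus\{t\}$ has $(s\cdot x)|_S\neq p$, so $\mu((s\cdot x)|_S)=(s\cdot x)(0)=x(s)$, forcing $x(s)=p(s)$. Second, $(t\cdot x)|_S=p$ reads as $x(j)=p(j-t)$ for all $j\in t+S=[k+t,\ell+t]$. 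On the overlap of the two index sets these descriptions must coincide, yielding $p(i)=p(i-t)$ for every $i\in(S\setminus\{t\})\cap[k+t,\ell+t]$. A one-line endpoint check shows this overlap equals $[k+t,\ell]\setminus\{t\}$ when $t>0$ and $[k,\ell+t]\setminus\{t\}$ when $t<0$ (the index $i=t$ is correctly excluded, since there $x(t)=p(0)\neq p(t)$). This is exactly the claimed condition.

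For the converse, assume the $S_+$ alternative for some $t\in S_+$. I would construct a witness by superimposing two copies of $p$: set $x(j)=p(j)$ for $j\in[k,\ell]\setminus\{t\}$ and $x(j)=p(j-t)$ for $j\in[k+t,\ell+t]$, extending $x$ arbitrarily (e.g.\ by the constant symbol $\mu(p)$) outside $[k,\ell+t]$. The hypothesis $p(i)=p(i-t)$ on $[k+t,\ell]\setminus\{t\}$ is precisely what makes the two prescriptions agree on their overlap, so $x$ is well defined, and at the single index $j=t$ the second prescription sets $x(t)=p(0)$. By construction $(t\cdot x)|_S=p$, so $\tau(x)(t)=\mu(p)=p(t)$; it then remains to check $\tau(x)(s)=p(s)$ for each $s\in S\setminus\{t\}$, which (since $x(s)=p(s)$ there) will hold as soon as $(s\cdot x)|_S\neq p$. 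Establishing $\tau(x)|_S=p$ and invoking Lemma \ref{cor-idem} then yields non-idempotence.

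The main obstacle is exactly this last point: excluding \emph{spurious} occurrences of $p$ in $\tau(x)$ at positions $s\in S\setminus\{t\}$. The key observation that dispatches it is that two occurrences at positions $s,t\in S$ are incompatible with $p(t)\neq p(0)$. Indeed, if $(s\cdot x)|_S=(t\cdot x)|_S=p$ with $s,t\in S$, then reading the single cell $x(s+t)$ from both windows gives $p(s)=(t\cdot x)(s)=x(s+t)=(s\cdot x)(t)=p(t)$; this is the one-line heart of Claim \ref{lemma-mod} (and of Lemma \ref{le-seq}). On the other hand, any occurrence at $s\in S\setminus\{t\}$ forces $x(s)=p(0)$, whereas our construction sets $x(s)=p(s)$, so $p(s)=p(0)$. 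Combining the two gives $p(t)=p(s)=p(0)$, contradicting $p(t)=\mu(p)\neq p(0)$. Hence no spurious occurrence exists, $\tau(x)|_S=p$, and the converse follows. I expect the bookkeeping of index ranges and endpoints (ensuring $i\in S$, that the two placements overlap, and that the excluded index is $t$) to be the only genuinely fiddly part, while the conceptual crux is the short two-occurrence argument above.
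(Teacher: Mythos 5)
Your proof is correct. The forward implication is essentially the paper's: Lemma \ref{le-2}, Theorem \ref{uniqueness} and Corollary \ref{cor-1} produce the unique $t$ with $p(t)=\mu(p)$, and comparing $x(i)=p(i)$ (from the non-occurrence of $p$ at $i$) with $x(i)=p(i-t)$ (from the occurrence at $t$) on the overlap of $S$ and $t+S$ gives exactly the stated condition. The converse, however, takes a genuinely different and simpler route. The paper chooses the bi-infinite witness $x=(p\vert_{[k,k+t)})^\infty\, p\, (p\vert_{(\ell-t,\ell]})^\infty$ and excludes spurious occurrences of $p$ at positions $r\in S\setminus\{t\}$ by verifying the hypothesis of Lemma \ref{le-seq} and running its modular-arithmetic argument (Claim \ref{uniqueness2}). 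You instead note that two occurrences of $p$ at positions $s,t\in S$ force $p(s)=x(s+t)=p(t)$ by reading the single cell $x(s+t)$ through both windows (this needs both $s$ and $t$ to lie in $S$, which holds since only positions in $S$ matter for computing $\tau(x)\vert_S$), while an occurrence at $s$ forces $x(s)=p(0)$, which your construction has set to $p(s)$; together these give $p(t)=p(0)$, contradicting $p(t)=\mu(p)\neq p(0)$. This two-line argument replaces Lemma \ref{le-seq} entirely in this direction and is insensitive to how $x$ is extended outside $[k,\ell+t]$. What the paper's heavier periodic construction buys is that the same witness satisfies $\tau(x)=(-t)\cdot x$, which is what drives Proposition \ref{cor-x} (infinite order, hence strict almost equicontinuity); your witness would not serve that later purpose, but for Theorem \ref{th-main} itself it is entirely adequate, and the paper's own remark after Proposition \ref{cor-x} concedes that a simpler witness suffices here.
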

 \begin{proof}
 Let $a := \mu(p) \neq p(0)$. Suppose that $\tau$ is not idempotent. By Lemma \ref{le-2}, there exists $x \in A^\Z$ such that $\mu((s\cdot x)\vert_S) = p(s)$ for all $s \in S$, and $t \in S \setminus \{0\}$ such that $(t \cdot x) \vert_S = p$. By Theorem \ref{uniqueness}, this $t \in S \setminus \{0\}$ is unique. Suppose that $t \in S_+$. By Corollary \ref{cor-1}, $p(t) = a$. Fix $i \in [k+t,t)$. By uniqueness of $t$, we have $(i \cdot x)\vert_S \neq p$, and since $i-t \in [k,0) \subset S$, it follows that
\[ p(i) = \mu((i \cdot x)\vert_S) = (i \cdot x)(0) = x(i) = (t \cdot x) (i-t) = p(i-t), \quad \forall i \in [k + t, t). \]
Now fix $j \in (t, \ell]$. By uniqueness of $t$, we have $(j \cdot x)\vert_S \neq p$, and since $j-t \in (0, \ell-t] \subset S$, it follows that  
\[ p(j) = \mu((j \cdot x)\vert_S) = (j \cdot x) (0) = x(j) = (t \cdot x)(j -t ) = p(j - t), \quad \forall j \in (t, \ell].  \]
The result follows analogously when $t \in S_-$, or may be shown by applying the well-known reflection automorphism of the monoid $\CA(\Z,A)$ (e.g, see \cite[Sec. 3]{CRG}).  

Suppose now that $p \in A^S$ satisfies that there exist $t \in S_+$ such that 
\begin{equation}\label{hyp}
 p(t) = a \quad \text{ and } \quad p(i) = p(i-t), \  \forall i \in [k+t, \ell] \setminus \{ t \}. 
\end{equation}

Define $x \in A^\Z$ as follows: for all $j \in \mathbb{Z}$,
\[ x(j) := \begin{cases}
p(j + nt) & \text{if } \exists n \geq 0 \text{ s. t. } j - nt \in [k , k + t) \\
p(j - t) & \text{if } j \in [k + t, \ell + t] \\
p(j - nt) & \text{if } \exists n \geq 2 \text{ s. t. } j-nt \in (\ell-t, \ell].
\end{cases} \]
As a bi-infinite sequence, we see that 
\[ x = (p\vert_{[k,k+t)})^\infty \ p \ (p \vert_{(\ell-t, \ell]})^\infty, \text{ where } x(t) = p(0),   \]
where the notation $q^\infty$ means that a pattern $q$ is repeated periodically infinitely many times. 

\begin{claim} \label{claim-x}
Observe that $x$ satisfies 
\[ x(j) = x( j -t), \ \forall j > t \]
\end{claim}
\begin{proof}
This follows by construction of $x$, and since,
\[  x(j) = p(j - t) = p(j) = x(j+t), \quad \forall j \in (t , \ell - t], \]
\end{proof}

 \begin{claim}\label{uniqueness2}
 With the above construction, $(t \cdot x) \vert_S = p$ and $(r \cdot x) \vert_S \neq p$, $\forall r \in S \setminus \{t\}$. 
 \end{claim}
\begin{proof}
It is clear by definition that $(t \cdot x) \vert_S = p$. Suppose that $(r \cdot x) \vert_S = p$ for $r \neq t$. Then
\[ p(0)= (r \cdot x)(0)=x(r) \quad \text{and} \quad p(t) =(r \cdot x)(t) = x(r+t).   \]
If $r \in (t , \ell]$, by Claim \ref{claim-x}, $x(r+t) = x(r)$, so $p(t) = p(0)$, which is a contradiction. 

Suppose that $r \in [k, t)$. We will show that for $i \in (r,t)$ and $\epsilon \in \{ r,t\}$ such that $i- \epsilon \in S=[k, \ell]$, then $p(i) = p(i- \epsilon)$. If $i-t \in S$, then $p(i) = p(i-t)$ by hypothesis (\ref{hyp}). Suppose that $i - r \in [k, \ell]$ and $i - t \not\in [k,\ell]$. The latter condition implies $i - t < k$, so $i \in (r, k + t) \subset [k, k+t)$. Then, by the construction of $x$: 
\[ p(i) = x(i) = (r \cdot x)(i-r) = p(i-r). \]
The above shows that the hypothesis (\ref{hyp-seq}) of Lemma \ref{le-seq} is satisfied, so $p(0) = p(t)$, which is a contradiction. The result of the claim follows.  
\end{proof}

Now we prove that $\mu(q_s) = p(s)$ for all $s \in S$. For $s = t$, $q_t = p$ implies that
\[ p(t) =  a  = \mu(p) =  \mu((t \cdot x) \vert_S).  \]
For all $s \in S \setminus \{t\}$, Claim \ref{uniqueness2} shows that $(s \cdot x) \vert_S \neq p$, so
\[ \mu((s \cdot x) \vert_S) = (s \cdot x)(0) = x(s). \]
So the result would follow if we show that $x(s) = p(s)$ for all $s \in S \setminus \{t\}$. If $s \in [k,k+t)$, it is clear from the construction of $x$ that $x(s) = p(s)$. If $s \in [k+t,\ell] \setminus \{ t \}$, the construction of $x$ and hypothesis (\ref{hyp}) imply
\[ x(s) = p(s-t) = p(s).   \] 
The result follows. 
\end{proof}

\begin{example}
For $S=[-2,2]$, Theorem \ref{th-main} shows that a one-dimensional cellular automaton with a unique active transition $p=01010$ is idempotent, while one with a unique active transition $p=00101$ is not idempotent (in this case $t=-2$). 
\end{example}

\begin{example}\label{ex-not}
Theorem \ref{th-main} fails when $S$ is not an interval. For example, the cellular automaton with a unique active transition $p \in A^S$ given in Example \ref{counter} is not idempotent but does not satisfy the conditions of Theorem \ref{th-main}.  
\end{example}

\begin{proposition}\label{cor-x}
Let $S \subset \mathbb{Z}$ be an interval such that $0 \in S$. Let $\tau : A^{\mathbb{Z}} \to A^{\mathbb{Z}}$ be the cellular automaton with a unique active transition $p \in A^S$. If $\tau$ is not idempotent, then there exists $x \in A^{\mathbb{Z}}$ such that 
\[ \tau^n(x) \neq \tau^m(x), \quad \forall n,m \in \mathbb{Z}_+, \ n \neq m. \] 
\end{proposition}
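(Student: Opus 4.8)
The plan is to exhibit the explicit $x \in A^\Z$ constructed in the proof of Theorem~\ref{th-main} and to show that $\tau$ acts on it as a nonzero shift, from which an infinite orbit follows at once. First I would reduce to the case $t \in S_+$: if the non-idempotence of $\tau$ is witnessed by some $t \in S_-$, I would apply the reflection automorphism of $\CA(\Z,A)$ used in the proof of Theorem~\ref{th-main} to transport the whole argument to a cellular automaton whose active transition satisfies the $S_+$ condition; since this automorphism is a bijection commuting with iteration, an infinite orbit for the reflected system yields one for $\tau$. So assume $t \in S_+$ with $p(t) = \mu(p) =: a \neq p(0)$ and $p(i) = p(i-t)$ for all $i \in [k+t,\ell]\setminus\{t\}$, and let $x$ be the sequence constructed in the proof of Theorem~\ref{th-main}, the one satisfying $(t\cdot x)\vert_S = p$.

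The key claim is that $\tau(x) = t \cdot x$. To prove it I would first establish a \emph{single mismatch} lemma: $x(m) \neq x(m+t)$ holds for exactly one $m \in \Z$, namely $m = t$. This is a finite case analysis splitting $\Z$ into the five ranges $m < k$, $m \in [k, k+t)$, $m \in [k+t, \ell]$, $m \in (\ell, \ell+t]$ and $m > \ell+t$, according to the piecewise definition of $x$. In the two outer tails $x$ is $t$-periodic, so $x(m) = x(m+t)$ there; on $[k, k+t)$ both values equal $p(m)$, and on $(\ell, \ell+t]$ both equal $p(m-t)$, using the periodicity of $x$ together with hypothesis~(\ref{hyp}); and on $[k+t, \ell]$ both $x(m)$ and $x(m+t)$ lie in the central copy of $p$, so $x(m) = p(m-t)$ and $x(m+t) = p(m)$, and (\ref{hyp}) forces equality for every $m \neq t$, while at $m = t$ one gets the genuine mismatch $x(t) = p(0) \neq a = p(t) = x(2t)$.

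From the single-mismatch lemma I would deduce that $p$ occurs in $x$ only centred at $t$: if $(m \cdot x)\vert_S = p$ for some $m \neq t$, then evaluating at the coordinates $0, t \in S$ gives $x(m) = p(0)$ and $x(m+t) = p(t) = a$, while the lemma gives $x(m) = x(m+t)$, whence $p(0) = a$, contradicting $a \neq p(0)$. Consequently $\tau$ leaves $x$ unchanged at every $m \neq t$ and sets $\tau(x)(t) = \mu(p) = a$; comparing with the lemma, at $m \neq t$ we have $\tau(x)(m) = x(m) = x(m+t) = (t\cdot x)(m)$, and at $m = t$ we have $\tau(x)(t) = a = x(2t) = (t\cdot x)(t)$. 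Hence $\tau(x) = t\cdot x$. By shift equivariance this iterates to $\tau^n(x) = (nt)\cdot x$ for all $n \geq 0$. Finally, the single-mismatch property shows $x$ is aperiodic (a nonzero period $q$ would make the mismatch set invariant under translation by $q$, impossible for a singleton), so $(nt)\cdot x \neq (mt)\cdot x$ whenever $n \neq m$, since $t \neq 0$; therefore $\tau^n(x) \neq \tau^m(x)$ for all distinct $n,m \in \Z_+$.

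I expect the single-mismatch lemma to be the main obstacle: although each of the five ranges is routine, one must handle the boundary indices carefully (for instance when $t = \ell$ or $k = 0$) and apply hypothesis~(\ref{hyp}) only on the index set where it is valid. Everything after that lemma—deducing uniqueness of the occurrence of $p$, the identity $\tau(x) = t\cdot x$, and the passage to $\tau^n(x) = (nt)\cdot x$ via shift equivariance and aperiodicity—is short and formal.
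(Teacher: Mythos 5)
Your proposal is correct and uses the same witness as the paper: the configuration $x = (p\vert_{[k,k+t)})^\infty \, p \, (p\vert_{(\ell-t,\ell]})^\infty$ from the proof of Theorem~\ref{th-main}, on which $\tau$ acts as a nonzero shift, together with aperiodicity of $x$. The difference is in how the intermediate facts are organized. The paper establishes that $p$ occurs in $x$ only at position $t$ via Claims~\ref{uniqueness2} and~\ref{uniqueness3}, which require a case analysis and a second invocation of Lemma~\ref{le-seq}, and then identifies $\tau(x)$ as a shift of $x$ by an explicit chain of pattern rewritings. Your single-mismatch lemma --- that $x(m)=x(m+t)$ for all $m\neq t$ while $x(t)=p(0)\neq a=x(2t)$ --- replaces all of this at once: it immediately rules out any occurrence of $p$ at $m\neq t$ (by comparing coordinates $0$ and $t$), and it immediately gives $\tau(x)(m)=x(m+t)$ for every $m$, as well as aperiodicity (a period would translate the singleton mismatch set onto itself). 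This is a genuine streamlining, and your five-range verification of the lemma is sound, including the boundary cases. One small point in your favour: with the paper's convention $(k\cdot x)(i)=x(i+k)$, your identity $\tau(x)=t\cdot x$ is the correct one; the paper writes $\tau(x)=(-t)\cdot x$, which is a harmless sign slip relative to its own definition of the shift action (the displayed computation there shows the occurrence of $p$ moving from position $t$ to position $0$, i.e.\ $\tau(x)(i)=x(i+t)$). Your reduction of the case $t\in S_-$ via the reflection automorphism of $\CA(\Z,A)$ is also fine and is slightly more explicit than the paper, which simply takes $t\in S_+$ without comment in this proof.
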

\begin{proof}
 Let $S = [k, \ell] \subset \mathbb{Z}$ be such that $k \leq 0 \leq \ell$. Let $\mu : A^S \to A$ be the local rule generated by the pattern $p$. Let $a := \mu(p) \in A \setminus \{ p(e) \}$.
By Theorem \ref{th-main},  there exist $t \in S_+$ such that 
\[  p(t) = a \quad \text{ and } \quad p(i) = p(i-t), \  \forall i \in [k+t, \ell] \setminus \{ t \}. \]
This implies that $p \vert_{[k,0)}=p \vert_{[k+t,t)}$ and $p \vert_{(0, \ell - t]} = p \vert_{(t, \ell]}$.

We consider $x \in A^{\mathbb{Z}}$ as defined in the proof of Theorem \ref{th-main}:
 \[ x = (p\vert_{[k,k+t)})^\infty \ p \ (p \vert_{(\ell-t, \ell]})^\infty, \text{ where } x(t) = p(0). \]
We will show that $p$ only appears once as a subpattern in $x$. 
 \begin{claim}\label{uniqueness3}
  With the above construction, $(r \cdot x) \vert_S \neq p$, $\forall r \in \Z \setminus \{t\}$. 
 \end{claim}
 \begin{proof}
By Claim \ref{uniqueness2}, $(r \cdot x) \vert_S \neq p$, $\forall r \in [k, \ell] \setminus \{t\}$, so suppose that there exists $r > \ell$ or $r < k$ such that $(r \cdot x) \vert_S = p$. Then,
\[  (r \cdot x)(0)= x(r) = p(0) \quad \text{and} \quad (r \cdot x)(t)= x(r + t) = p(t).  \]
Recall that the construction of $x$ satisfies that $x(i) = x(i - nt)$ for all $i < k+t$, $n \in \mathbb{Z}_+$ and $x(i) = x(i+nt )$ for all $i > \ell - t$, $n \in \mathbb{Z}_+$. 

If $r < k$, then $r + t < k + t$, so 
\[ p(t) = x(r + t) = x(r + t - t) = x(r) =p(0),  \]
which is a contradiction. If $r > \ell$, then $r > \ell - t$, so
\[ p(0) = x(r) = x(r + t) = p(t), \]
which is also a contradiction. The result follows. 
 \end{proof}
 
The previous claim implies that all the terms in $\tau(x) \in A^\Z$ are the same as in $x \in A^\Z$ except that $x(t) = p(0)$ and $\tau(x)(t) = a$. Therefore,
\begin{align*}
 \tau(x) & = (p\vert_{[k,k+t)})^\infty \ p\vert_{[k,0)} \ a \ p \vert_{(0,\ell]} \ (p \vert_{(\ell-t, \ell]})^\infty  \\
 & = (p\vert_{[k,k+t)})^\infty \ p \vert_{[k+t,t)} \ p(t) \ p\vert_{(0, \ell - t]} \ p \vert_{(\ell-t, \ell]} (p \vert_{(\ell-t, \ell]})^\infty \\
 & = (p\vert_{[k,k+t)})^\infty \ p\vert_{[k,k+t)}\ p \vert_{[k+t,t)} \ p(t) \ p \vert_{(t, \ell]} \ (p \vert_{(\ell-t, \ell]})^\infty \\
& = (p\vert_{[k,k+t)})^\infty \ p \ (p \vert_{(\ell-t, \ell]})^\infty
\end{align*}
where $\tau(x)(0) = p(0)$. This shows that
\[ \tau(x) =  (-t) \cdot x. \]
Iterating, we see that
\[ \tau^n(x) = (-nt) \cdot x, \quad \forall n \in \mathbb{Z}_+.  \]
As $x$ is aperiodic (since $p$ appears only once as a subpattern), the result follows. 
\end{proof}

\begin{remark}
The proof of Theorem \ref{th-main} follows easily if we define
\[ x =  (a_k)^\infty \ (p\vert_{[k,k+t)}) \ p \ (a_\ell)^\infty, \text{ where } x(t) = p(0),   \]
and $a_k, a_\ell \in A$ are fixed symbols such that $a_k \neq p(k)$ and $a_\ell \neq p(\ell)$. However, the $x \in A^\Z$ that was actually chosen in the proof allows us to directly deduce Proposition \ref{cor-x}. 
\end{remark}

The \emph{space-time diagram} of a cellular automaton $\tau : A^\Z \to A^\Z$ with initial configuration $x \in A^\Z$ is a function $z : \mathbb{Z} \times \mathbb{N} \to A$ defined by $z(i,n) := \tau^n(x)(i)$, for all $n \in \mathbb{N}$ and $i \in \mathbb{Z}$. In the function $z$, we consider the input $n$ as the time and we allow $n=0$. When $A=\{0,1\}$, we shall depict a space-time diagram as a two-dimensional regular grid whose cells $(i,n)$ are colored in black when $z(i,n)=1$ and in white when $z(i,n) = 0$.  

\begin{example}
For $S=[-3,3]$ and $A=\{0,1\}$, the cellular automaton $\tau : A^\Z \to A^\Z$ with unique active transition $p=1101100 \in A^S$ is not idempotent as it satisfies condition of Theorem \ref{th-main} with $t = 3$. Figure \ref{fig1} shows the space-time diagram of $\tau$ with initial configuration $x \in A^\Z$ as given in the proof of Proposition \ref{cor-x}.
\begin{figure}
\label{fig1}
\centering
\includegraphics[scale=.9]{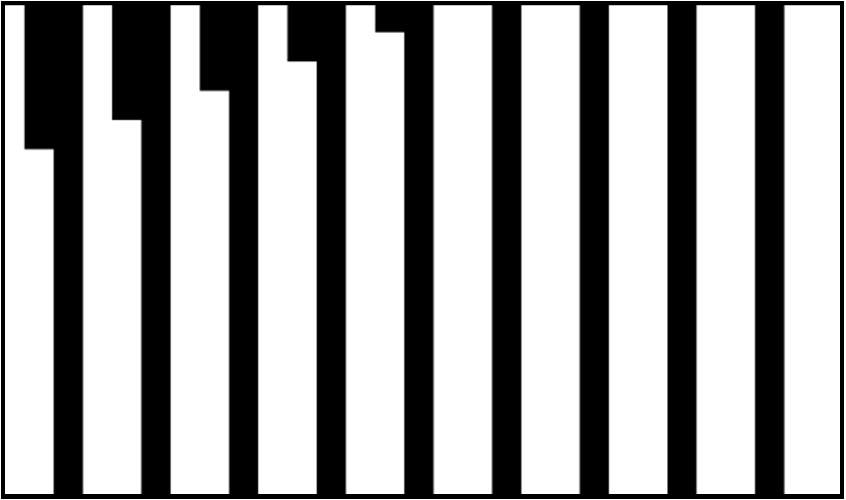}
\caption{Space-time diagram of CA with unique active transition $p=1101100$ and $S=[-3,3]$.}
\end{figure}
\end{example}

\begin{corollary}
Let $S \subset \mathbb{Z}$ be an interval such that $0 \in S$. Let $\tau : A^{\mathbb{Z}} \to A^{\mathbb{Z}}$ be the cellular automaton with a unique active transition $p \in A^S$. Then, $\tau$ is not idempotent if and only if $\tau$ is strictly almost equicontinuous. 
\end{corollary}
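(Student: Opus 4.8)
The plan is to reduce the statement to the equivalence recalled in the introduction, that a cellular automaton in $\CA(\Z,A)$ is equicontinuous if and only if it has finite order, i.e. $\tau^m = \tau^n$ for some $m \neq n$. Since Lemma \ref{le-b}(3) already guarantees that every $\tau$ with a unique active transition is almost equicontinuous, being \emph{strictly} almost equicontinuous is the same as being not equicontinuous. Hence it suffices to prove that $\tau$ is idempotent if and only if it is equicontinuous, which via the order characterization becomes the claim that $\tau$ is idempotent if and only if it has finite order.

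First I would dispatch the easy implication. If $\tau$ is idempotent, then $\tau^2 = \tau$, so $\tau$ trivially has finite order (take $m=2$, $n=1$); by the order characterization $\tau$ is equicontinuous, and therefore not strictly almost equicontinuous.

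For the converse, I would invoke Proposition \ref{cor-x}. If $\tau$ is not idempotent, that proposition produces a configuration $x \in A^\Z$ whose forward iterates $\tau^n(x)$ are pairwise distinct. Consequently no relation $\tau^m = \tau^n$ with $m \neq n$ can hold as an identity of functions on $A^\Z$, since it would force $\tau^m(x) = \tau^n(x)$; thus $\tau$ does not have finite order and so is not equicontinuous. Combining this with almost equicontinuity from Lemma \ref{le-b}(3) yields that $\tau$ is strictly almost equicontinuous.

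The substantive content --- manufacturing a point with an infinite aperiodic orbit in the non-idempotent case --- has already been carried out in Proposition \ref{cor-x}, where the orbit is identified explicitly with the shifts $\tau^n(x) = (-nt)\cdot x$ of an aperiodic configuration. So the only real step remaining is the bookkeeping that links non-idempotence to the failure of finite order. The one point I would state carefully is that the order characterization concerns $\tau$ as a transformation of the entire space, which is exactly why Proposition \ref{cor-x} is needed to rule out $\tau^m=\tau^n$ \emph{globally} rather than merely at a single configuration; beyond this observation I do not expect any genuine obstacle.
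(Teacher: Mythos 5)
Your proposal is correct and follows exactly the paper's route: the paper's proof of this corollary is a one-line appeal to Proposition \ref{cor-x}, Lemma \ref{le-b}, and the equivalence between equicontinuity and finite order, which are precisely the three ingredients you assemble. You have merely written out the bookkeeping that the paper leaves implicit.
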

\begin{proof}
This follows by Proposition \ref{cor-x}, Lemma \ref{le-b}, and the fact that $\tau$ is equicontinuous if and only if $\tau^m = \tau^n$, for some $m \neq n$ (see \cite[Ch. 5]{Kurka}). 
\end{proof}

\begin{example}\label{counter2}
For $S=\{-1,0,3\}$ and $A=\{0,1\}$, the cellular automaton $\tau : A^\Z \to A^\Z$ with unique active transition $p=010 \in A^S$ is not idempotent, as noted in Example \ref{counter}. Figure \ref{fig2} shows a space-time diagram of $\tau$ with initial configuration
\[ x = (011)^\infty 01 0  0^\infty , \]
in which the pattern $p$ appears at every moment of time, which implies that $\tau$ is strictly almost equicontinuous. However, this may not be deduced from Proposition \ref{cor-x} since $S$ is not an interval.
\begin{figure}
\label{fig2}
\centering
\includegraphics[scale=.88]{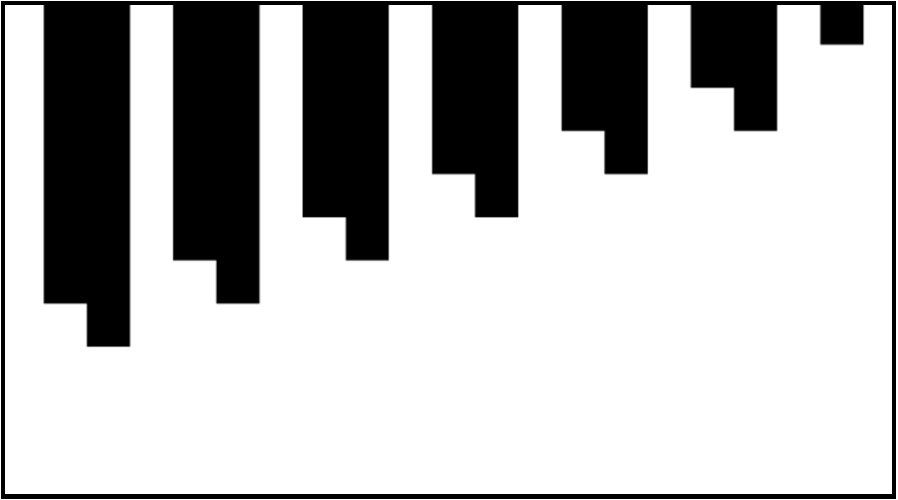}
\caption{Space-time diagram of CA with unique active transition $p=010$ and $S=\{-1,0,3\}$.}
\end{figure}
\end{example}


\section{Conclusions}

We characterized the idempotency of a one-dimensional cellular automaton $\tau$ with a unique active transition $p$ in terms of the existence of a subpattern of $p$ with a translational symmetry, and we showed that such $\tau$ is not idempotent if and only if it is strictly almost equicontinuous. Therefore, when $\tau$ is idempotent, the pattern $p$ will never appear after the first step in the space-time diagram of $\tau$, but when $\tau$ is not idempotent, for certain configurations of $A^\Z$ the pattern $p$ will appear at all times in the space-time diagram of $\tau$. 

We believe that the most important lines for future work is to generalize Theorem \ref{intro-main} to one-dimensional CA with unique active transition $p \in A^S$, where the finite neighborhood $S \subset \Z$ is not an interval, and to multidimensional CA. We we propose the following conjecture which generalizes Theorem \ref{intro-main} to both of the previous scenarios. 

\begin{conjecture}
For $d \in \mathbb{Z}_+$, let $\tau : A^{\Z^d} \to A^{\Z^d}$ be a cellular automaton with a unique active transition $p \in A^S$, where $S \subset \Z^d$ is any finite subset such that $0 \in S$. Then, $\tau$ is not idempotent if and only if there exists $t \in S$ such that 
\[  p(t) \neq p(e) \text{ and } p \vert_{U \setminus \left\{t \right\} } = p_{(-t + U) \setminus \{ 0 \}}, \]
where $U:= S \cap (t + S)$. 
\end{conjecture}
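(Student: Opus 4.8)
The conjecture writes $e$ for the identity $0\in\Z^d$, and the overlap condition $p\vert_{U\setminus\{t\}}=p_{(-t+U)\setminus\{0\}}$ should be read, following the $q_s$-notation of the remark after Lemma \ref{le-2}, as: $p(u)=p(u-t)$ for every $u\in U\setminus\{t\}$, where $U=S\cap(t+S)$ (note $t\in U$ because $0\in S$). The plan is to reduce everything to the group-theoretic machinery of Section 3 that the paper states holds over an arbitrary group, and then to treat the two implications separately; the role played in the interval case by the uniqueness Theorem \ref{uniqueness} becomes the central difficulty. First the reduction: Lemma \ref{le-b}(1)--(2), Lemma \ref{cor-idem}, Lemma \ref{le-2} and Corollary \ref{cor-1} all hold verbatim for $G=\Z^d$, so, writing $a:=\mu(p)\neq p(0)$, I would record the dictionary that $\tau$ is not idempotent if and only if there is $x\in A^{\Z^d}$ with $\mu((s\cdot x)\vert_S)=p(s)$ for all $s\in S$, and that any such $x$ admits $t\in S\setminus\{0\}$ with $(t\cdot x)\vert_S=p$ and $p(t)=a$. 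This already produces the first half of the condition in the \emph{strong} form $p(t)=\mu(p)$. For $\vert A\vert=2$ this is the same as $p(t)\neq p(0)$, but for $\vert A\vert>2$ the literal hypothesis $p(t)\neq p(e)$ is too weak and the conjecture as worded is false: taking $S=[0,1]$, $p(0)=0$, $p(1)=1$, $\mu(p)=2$ gives an idempotent $\tau$ by Theorem \ref{th-main}, yet $t=1$ satisfies $p(1)\neq p(0)$ and has empty overlap $U\setminus\{t\}$. I would therefore prove the equivalence with the sharpened condition $p(t)=\mu(p)$, which specializes to the stated conjecture over binary alphabets.

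For the forward implication I would start from the dictionary, picking $x$ with $\tau(x)\vert_S=p$ (legitimate since $\im(\tau)$ is shift-invariant) and an occurrence $t\in S\setminus\{0\}$ with $(t\cdot x)\vert_S=p$. Taking $s=u-t$ in $(t\cdot x)(s)=p(s)$ gives $x(u)=p(u-t)$ for every $u\in U$, while any $s\in S$ with $(s\cdot x)\vert_S\neq p$ satisfies $x(s)=\mu((s\cdot x)\vert_S)=p(s)$. Combining these yields $p(u)=p(u-t)$ for every $u\in U\setminus\{t\}$ that is \emph{not itself} an occurrence of $p$ in $x$. The obstacle is exactly the extra occurrences: if some $u\in U\setminus\{t\}$ has $(u\cdot x)\vert_S=p$, then Corollary \ref{cor-1} forces $p(u)=a$ while $x(u)=(u\cdot x)(0)=p(0)$ gives $p(u-t)=p(0)\neq a$, so the identity fails at $u$ and $t$ is the wrong centre. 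In the interval case Theorem \ref{uniqueness} rules this out by making $t$ unique, but for general $S$ this is false (Example \ref{counter}), so the plan is to replace uniqueness by an \emph{isolation} statement: find a witnessing pair $(x,t)$ with $T\cap(t+S)=\{t\}$, where $T:=\{s\in S:(s\cdot x)\vert_S=p\}$.

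I expect this isolation step to be the main obstacle, as it is the genuine multidimensional replacement for the modular sliding of Lemma \ref{le-seq} and Theorem \ref{uniqueness}. The idea is to split into two regimes. When occurrences of $p$ along some coset-line of $\Z t$ are sparse, one chooses $t$ extremally on the finite set $T$ with respect to a generic linear functional on $\Z^d$ and uses the relations forced at a conflict ($p(u)=a$, $p(u-t)=p(0)$) to move from a conflicting centre to a strictly later one, arguing termination; equivalently one aims to show the conflict relation on occurrences is acyclic, so that a sink exists. When occurrences are dense along a line (the periodic regime), I would instead run the modular argument of Lemma \ref{le-seq} directly to deduce $p(0)=p(t)$, contradicting $p(t)=a$, thereby excluding that regime. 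Establishing this dichotomy rigorously over $\Z^d$, without the linear order and interval structure exploited in Section 3, is where I anticipate the real work.

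For the backward implication, assume $p(t)=a$ and $p(u)=p(u-t)$ for all $u\in U\setminus\{t\}$, and construct a witness modelled on the one-dimensional configuration $(p\vert_{[k,k+t)})^\infty\,p\,(p\vert_{(\ell-t,\ell]})^\infty$ of Theorem \ref{th-main}, which is eventually $t$-periodic with a single defect carrying the unique occurrence of $p$. The periodicity direction is now the line $H:=\Z t\subseteq\Z^d$. The plan is to declare $(t\cdot x)\vert_S=p$, extend $x$ by $t$-periodicity along each coset-line of $H$ meeting $t+S$ (using $x(v)=x(v-t)$ on one side and $x(v)=x(v+t)$ on the other), and set $x$ equal to a fixed generic symbol on every coset of $H$ disjoint from $t+S$; the overlap hypothesis $p(u)=p(u-t)$ is precisely the gluing condition making this extension well defined. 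I would then verify $\mu((s\cdot x)\vert_S)=p(s)$ for all $s\in S$ exactly as in Claim \ref{uniqueness2}: at $s=t$ via $p(t)=a=\mu(p)$, and at $s\neq t$ by reducing to $(s\cdot x)\vert_S\neq p$ through the $\Z^d$-analogue of Lemma \ref{le-seq}. The secondary obstacle here is that the complement of $H$ is $(d-1)$-dimensional rather than two tails, so excluding spurious occurrences of $p$ straddling several coset-lines requires choosing the off-line filler symbol generically, as in the remark after Proposition \ref{cor-x}; once this is done, Lemma \ref{le-2} gives non-idempotence and the aperiodic-orbit refinement of Proposition \ref{cor-x} should transfer, although it is not needed for the idempotence characterization itself.
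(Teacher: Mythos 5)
This statement is not proved in the paper at all: it is posed as an open conjecture in Section~4, and the authors explicitly say that resolving it is future work, so there is no paper proof to measure your attempt against; accordingly your submission must stand as a plan, and it is not yet a proof. Within that frame, your sharpest contribution is correct and genuinely valuable: as literally worded (reading $p(e)$ as $p(0)$), the conjecture is false once $\vert A \vert \geq 3$. Your example with $S=[0,1]$, $p(0)=0$, $p(1)=1$, $\mu(p)=2$ is idempotent by Theorem~\ref{th-main} (no $t$ with $p(t)=\mu(p)$ exists, and a direct check confirms $p$ never appears in any image), yet $t=1$ satisfies $p(1)\neq p(0)$ with vacuous overlap $U\setminus\{t\}=\emptyset$, so the stated condition would predict non-idempotence; the hypothesis must be strengthened to $p(t)=\mu(p)$, exactly as in Theorem~\ref{th-main} and Corollary~\ref{cor-1}. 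Your reductions via Lemma~\ref{le-2} and Corollary~\ref{cor-1} are sound (the paper notes these hold over arbitrary groups), and your reading of the restriction condition as $p(u)=p(u-t)$ for all $u\in U\setminus\{t\}$ is the intended one, since it specializes to Theorem~\ref{th-main} when $S$ is an interval.

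The genuine gap is the one you flag yourself, and it is precisely the open content of the conjecture: the ``isolation'' statement that some witnessing pair $(x,t)$ has no second occurrence of $p$ inside the window $t+S$ is asserted, not proved. Your periodic-regime exclusion invokes a $\Z^d$-analogue of Lemma~\ref{le-seq}, but that lemma's proof rests on Lemma~\ref{rk-int} (for $i\in(t,r)$, either $i-t\in S$ or $i-r\in S$), which is a property of intervals and fails for general $S$ even in dimension one --- that failure is exactly what produces the two occurrences $s=-1$ and $s=3$ in the $S=\{-1,0,3\}$ example and destroys the uniqueness of Theorem~\ref{uniqueness}; so the modular sliding argument does not transfer as claimed. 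In the sparse regime, your extremal/acyclicity sketch records the relations forced at a conflict ($p(u)=a$, $p(u-t)=p(0)$) but gives no argument that the conflict relation on occurrences is acyclic or that the extremal choice terminates; nothing stated rules out cyclic chains of conflicting centres. The backward direction has a secondary unresolved point: when $\vert A\vert =2$ there need be no ``generic'' filler symbol simultaneously excluding all spurious translates that straddle the off-line cosets, so both the well-definedness of your periodic extension and the analogue of Claim~\ref{uniqueness2} require a real argument there. In short: correct diagnosis of the conjecture's wording, sound reductions and a plausible architecture, but the core forward step remains open --- consistent with the paper's own assessment that the conjecture requires more than the study of $X_p$.
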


Interestingly, the previous conjecture may be written using the \emph{self-agreement set} of a pattern $p \in A^S$ which was independently introduced in \cite[Definition 5.1]{Wu} in the study of the subsifts $X_p$ with a unique forbidden pattern. However, we stress that the proof of this conjecture must involve more than the study of the subshift $X_p$, since, as show in Example \ref{counter}, there exist patterns $p_1 \in A^{S_1}$ and $p_2 \in A^{S_2}$ such that $X_{p_1} = X_{p_2}$, but the cellular automaton with unique active transition $p_1$ is idempotent, withe the one with unique active transition $p_2$ is not idempotent.  

It was noted in private commutation by Edgar Alcal\'a Arroyo that the cellular automaton $\tau : A^\Z \to A^\Z$ with unique active transition $p=010 \in A^S$, with $S=\{0,1,3\}$, is neither idempotent nor has infinite order:  it satisfies that $\tau^2 \neq \tau$ and $\tau^3 = \tau^2$. This is an example of an equicontinuous one-dimensional cellular automaton with a unique active transition whose finite neighborhood is not an interval and it is not idempotent. Therefore, another interesting direction for future work, is to characterize the equicontinuity of such classes of one-dimensional cellular automata.   

We also believe that it is interesting to study the implications or interpretations that Theorem \ref{intro-main} may have in the modeling of discrete complex systems. For example, the active transition $p$ may represent a certain rule of a genetic mutation (e.g., see \cite{Chen}). In this context, active transitions associated with non-idempotent CA may be interpreted as a more persistent mutation rule due to the fact that the cellular automaton is strictly almost eqicontinuous.



\section*{Acknowledgments}

The second author was supported by CONAHCYT \emph{Becas nacionales para estudios de posgrado}. The third author was supported by a CONAHCYT Postdoctoral Fellowship \emph{Estancias Posdoctorales por M\'exico}, No. I1200/320/2022.


\end{document}